\documentclass[11pt]{article}
\usepackage{amsmath, amssymb, enumerate}
\usepackage{amsfonts, setspace}
\usepackage{cite}
\usepackage{graphicx, subeqnarray, flushend, amsthm}

\usepackage[margin=1in]{geometry}

\newtheorem{lemma}{Lemma}
\newtheorem{theorem}{Theorem}
\newtheorem{condition}{Condition}

\theoremstyle{definition}
\newtheorem{definition}{Definition}
\newtheorem{remark}{Remark}

\newcommand{\linf}{$\mathcal{L}_\infty$}
\newcommand{\ltwo}{$\mathcal{L}_2$}
\newcommand{\iqc}{$\rm{\delta}$QC}
\newcommand{\imm}{$\rm{\delta}$MM}

\newcommand\ignore[1]{{}}
\title{\ltwo~Observers for Nonlinear Systems with Unbounded Unknown Inputs}
\author{Martin Corless$^1$ and Ankush Chakrabarty$^{2}$
	\thanks{$^1$School of Aeronautics and Astronautics, Purdue University, West Lafayette, IN, USA. Email: \texttt{corless@purdue.edu}}%
	\thanks{$^2$Control and Dynamical (CD) Systems Group, Mitsubishi Electric Research Laboratories, Cambridge, MA, USA. Corresponding author: A.~Chakrabarty. Phone: +1~(617)~758-6175. Email: \texttt{chakrabarty@merl.com}.}%
}

\usepackage[dvipsnames]{xcolor}
\usepackage{hyperref}
\usepackage{cleveref}

\newcommand\myshade{85}
\colorlet{mylinkcolor}{RoyalBlue}
\colorlet{mycitecolor}{SeaGreen}
\colorlet{myurlcolor}{Aquamarine}

\hypersetup{
	linkcolor  = mylinkcolor!\myshade!black,
	citecolor  = mycitecolor!\myshade!black,
	urlcolor   = myurlcolor!\myshade!black,
	colorlinks = true,
}
\title{\bf \ltwo~Observers for a Class of Nonlinear Systems with Unknown Inputs \\ \bigskip
\normalsize [Full version of paper presented in European Control Conference, 2019]}
\date{November 16, 2018}
\begin{document}
	
	\maketitle
	\begin{abstract}                
	We consider the problem of estimating the state and unknown input for a large class of nonlinear systems subject to  unknown exogenous inputs. The  exogenous inputs themselves are modeled as being generated by a nonlinear system subject to unknown inputs.
	The nonlinearities considered in this work are characterized by multiplier matrices that include many commonly encountered nonlinearities.
	We obtain a linear matrix inequality (LMI), that, if feasible, provides the gains for an observer which results in certified $\mathcal{L}_2$ performance of the error dynamics associated with the observer.
	We also present conditions which guarantee that the $\mathcal{L}_2$ norm of the error can be made arbitrarily small and investigate conditions for feasibility of the proposed LMIs.
	\end{abstract}
	
	
	\section{Introduction}
	\label{sec:intro}
	Exogenous unknown inputs acting on a  dynamical system (plant)  can result in compromised safety and degraded performance.
	One way to protect a system against such unknown attacks is by employing unknown input observers
	(UIOs), as reported in \cite{Teixeira2014Thesis} and \cite{chakrabarty2017delayed}. 
	Common  estimation frameworks for systems in which one assumes stochastic models for  the unknown exogeneous  input include Kalman filtering~\cite{keller1999two} and minimum variance filters~\cite{khemiri2011novel}. For  unknown exogeneous  inputs where underlying statistics are not available and cannot be guessed, methods that have proven effective include: adaptive estimation~\cite{zhang2010fault}, sliding mode observers~\cite{floquet2007sliding,fridman2008higher}, and observers that minimize the system's input-output gain such as $\mathcal{H}_\infty$ observers~\cite{Chakrabarty2016state,chen2007observer,zemouche2008observers,zemouche2016HinftyObserver}. Recent work has produced many effective methods for generating unknown input observers for nonlinear systems; see for example~\cite{Bejarano2012,Chen2006,Ha2004,Witczak2016,barbot2007algebraic,zemouche2009nonlinear,Chakrabarty2016sufficient,chakrabarty2018state}.
	
	A common underlying assumption in many of the cited works is that the  unknown exogeneous  input is bounded. One  way of relaxing that assumption is by using an extended state observer, that is, by appending the exogenous input  to the system state.
	Exogenous input estimation via an extended state observer has been successful in various practical systems, including robotic systems~\cite{su2004task}, electric drive systems~\cite{liu2012speed}, power electronics~\cite{wang2015extended}, and avionics~\cite{xia2011attitude}. These exogenous inputs could be {completely unknown}, or {partially unknown}. In this paper, we refer to partially unknown inputs as exogenous inputs that have been generated by a completely unknown input acting on 
	in~\cite{francis1976internal,johnson1971accomodation}. 
	Although prior investigations into extended state observer design for  estimation  with unknown exogenous inputs has yielded useful results~\cite{li2012generalized,godbole2013performance,wei2014composite}, the authors assume that the inputs  are bounded, and tackle linear systems or linearized versions of nonlinear systems; the sparsity of results on observers for  nonlinear  systems with unknown deterministic exogenous  inputs   motivates the present paper. 
	
	In this paper, we propose  extended state observers to   estimate  the state and the unknown exogenous input for   nonlinear systems 
	whose nonlinearities satisfy so-called incremental quadratic constraints \cite{Acikmese2011,Chakrabarty2016}. 
	Such nonlinearities encompass a wide range of nonlinearities including globally Lipschitz, one-sided Lipschitz, monotonic and other commonly occurring nonlinearities. Also, the exogenous input can be unbounded. Observer design is based on a linear matrix inequality which we demonstrate is satisfied by a large class of commonly encountered nonlinear systems.
	The observers guarantee that the input-output system from exogenous input to observer error is $\mathcal{L}_2$-stable with a specific gain; for linear systems this gain is an upper bound on the $\mathcal{H}_\infty$ norm of the system.
	We also present conditions which guarantee that  an arbitrarily small \ltwo~gain can be achieved.
	

	\section{Problem statement}\label{sec:ps}
	\subsection{Systems under consideration}
	Consider a nonlinear time-varying  system (the plant)  described by
	\begin{subequations}
		\label{eq:plant2}
		\begin{alignat}{8}
		\dot x \;&=\; A x \;&+&\;B_f f_1(t, y, q_1) \;&+&\; g_x(t,y) \;&+&\; B w,			\\
		q_1 \;&=\; C_{q_1} x \;&+&\; D_{q_1f} f_1(t,y,q_1) \;&+&\; g_{q_1}(t,y) \;&+&\; D_{q_1} w,		\\
		y \;&=\; C x \;&+&\; D_f f_1(t,y,q_1) \;&+&\; g_y(t,y) \;&+&\; D w.
		\end{alignat}
	\end{subequations}
	Here, $t \in\mathbb{R}$ is the {time variable}, $x(t)\in\mathbb{R}^{n_x}$ is  the {state}, 
	$y(t)\in\mathbb{R}^{n_y}$ is the {measured output} and  
	$w(t)\in\mathbb{R}^{n_w}$ models the disturbance input and  the measurement noise combined into one term; we refer to it as the
	exogenous input; this is unknown at every $t$.
	The vector $f_1(t,y, q_1) \in\mathbb{R}^{n_{f_1}}$ models  nonlinearities  of known structure, but because this term depends on the state $x$ (through $q_1$), it cannot be instantaneously determined from measurements. The vector $q_1\in\mathbb{R}^{n_{q_1}}$ is a state-dependent argument of the nonlinearity $f_1$.
	The vectors $g_x(t,y) \in\mathbb{R}^{n_x}$,
	$g_q(t,y) \in\mathbb{R}^{n_q}$ and $g_y(t,y) \in\mathbb{R}^{n_y}$ represent nonlinearities which can be calculated instantaneously from measurements.
	An example is $g_x(t,y) = u(t)$ where $u(t)$ is a control input.
	All the matrices are  constant and of appropriate dimensions.
	
	We consider the general case in which   the exogenous input $w$ is generated by the following 
	{ nonlinear exogenous input model:}
	\begin{subequations}
		\label{eq:internal_model_w}
		\begin{alignat}{8}
		\dot x_m &= A_m x_m \;&+&\; B_{mf} f_2(t, q_2) \;&+&\; B_m v				\\
		q_2 &= C_{q_2} x_m \;&+&\; D_{q_2f} f_2(t, q_2) \;&+&\; D_{q_2} v			\\
		w &= C_m x_m \;&+&\; D_{wf} f_2(t, q_2) \;&+&\; D_m v,
		\end{alignat}
	\end{subequations}
	where $x_m(t)\in\mathbb{R}^{n_m}$ is a exogenous input model state, 
	$v(t)\in\mathbb{R}^{n_v}$ is another  unknown exogenous input signal and $f_2$ is a known nonlinearity.

	\begin{definition}[$\mathcal L_2$ signal]
		We say  that a signal $s(\cdot) : [t_0, \infty) \rightarrow \mathbb{R}^p$ is $\mathcal{L}_2$ if 
		$
		\int_{t_0}^\infty  \|s(t)\|^2 \,\mathrm{d} t
		$
		is finite 
		where $||s(t)\|$ is the usual Euclidean norm of $s(t)$ and we define its $\mathcal{L}_2$ norm by
		\begin{align}
		\|s(\cdot)\|_2 =  \left(\int_{t_0}^\infty  \|s(t)\|^2 \,\mathrm{d} t\right)^{\frac{1}{2}}.
		\end{align}
		When we say that a signal is bounded, we mean that it is an $\mathcal{L}_2$ signal.
	\end{definition}
	
	\begin{remark}
		\label{rem:boundedDer}
		The model~\eqref{eq:internal_model_w}  is used to reflect partial knowledge regarding the unknown input $w$.
		For example, 
		if $w$ is an  unknown input with an unknown  derivative  which is $\mathcal{L}_2$ 
		it  can be described with  model~\eqref{eq:internal_model_w} with a bounded $v$; specifically,
		\begin{align}
		\label{eq:simpleW}\dot{x}_m= v ,			\qquad w = x_m
		\end{align}
		where  $v= \dot{w}$ is $\mathcal{L}_2$.
		An example is $$w(t) =a + \ln(1+bt)$$ where $a$ and $b$ are unknown constants.
	\end{remark}

	In this paper, we characterize  nonlinearities  via their incremental multiplier matrices.
	\begin{definition}
		[Incremental Multiplier Matrices]
		A symmetric matrix $M \in \mathbb{R}^{(n_q+n_f)\times (n_q+n_f)}$  is an {\sf incremental multiplier matrix}~(\imm) for 
		$f$ if it satisfies the following {\sf incremental quadratic constraint}~(\iqc) for all 
		$ t \in\mathbb{R}$,   $y \in \mathbb{R}^{n_y}$ and $q_1, q_2 \in\mathbb{R}^{n_q}$:
		\begin{equation}
		\label{eq:iqc}
		\begin{bmatrix}
		\Delta q \\ \Delta f
		\end{bmatrix}^\top M \begin{bmatrix}
		\Delta q \\ \Delta f
		\end{bmatrix} \ge 0,
		\end{equation}
		where $\Delta q\triangleq q_1 - q_2$ and $\Delta f \triangleq f(t,y,q_1) - f(t,y, q_2)$.
	\end{definition}
	The utility of characterizing nonlinearities using incremental multipliers is that our observer design strategy applies to a broad class of nonlinear systems. \imm~for many common nonlinearities are provided in~\cite{Acikmese2011,Chakrabarty2016}.
	
	\subsection{Problem statement}
	Ideally, we wish to obtain observers that provide an estimate  of 
	$x$ and $w$.
	To this end we define the augmented state
	\begin{equation}\label{eq:aug_state}
	\xi \triangleq \begin{bmatrix}
	x \\ x_m
	\end{bmatrix}
	\end{equation}
	and look for observers to obtain an estimate $\hat{\xi}$ of $\xi$.
	With
	\[
	\begin{bmatrix}
	\hat x \\ \hat x_m
	\end{bmatrix}
	= \hat{\xi}
	\]
	$\hat{x}$ and $\hat{x}_m$ will be the observer estimates of $x$ and $x_m$, respectively.
	An estimate of $w$ can be achieved if $D_{wf}=D_m = 0$. In this case, an estimate of the unknown input $w$ is given by
	\[
	\hat{w} = C_m \hat{x}
	\]
	This occurs in the special case when $w$ has a bounded derivative; see Remark \ref{rem:boundedDer}.
	
	Let
	\begin{equation}
	\label{eq:error}
	e = \hat \xi - \xi
	\end{equation}
	denote the { estimation error} and suppose
	that 
	\begin{equation}\label{eq:perf_out}
	z = H e
	\end{equation}
	is  a user-defined  { performance output} 
	associated with the observer where $z\in \mathbb{R}^{n_z}$.
	%
	As we  demonstrate below,
	a proposed observer generates  an { error system} that can be described by
	\begin{subequations}
		\label{eq:gensys}
		\begin{align}
		\dot{e} &= F(t,e,v),	\label{eq:gensysa}\\
		z&= G(t,e,v).  	\label{eq:gensysb}		
		\end{align}	
	\end{subequations}
	We want this system to have the following performance 		
	with performance level $\gamma$.
	
	\begin{definition}
		\label{defn1} 
		Let $\gamma$ be a non-negative  real scalar. The input-output system~\eqref{eq:gensys} 
		is {\sf globally uniformly \ltwo-stable with performance level $\gamma$} if it has the following properties.
		\begin{enumerate}[{(P1)}]
			\item
			{\sf  Global uniform exponential stability with zero input.}
			The zero-input system ($v \equiv0$) is globally uniformly exponentially stable about the origin.
			%
			\item
			{\sf  Global uniform boundedness of the error state.}
			\label{defn1:gub}
			For  every initial condition $e(t_0) =e_0 $,
			and every  \ltwo~unknown input $v(\cdot)$, there exists 
			$\beta_1(e_0,\|v(\cdot)\|_2)$ such that
			$$
			\|e(t)\| \le \beta_1(e_0,\| v (\cdot)\|_2)$$ for all $t \ge t_0$.
			\item
			\label{defn1:zero}
			{ \sf Output response.} 
			For  every initial condition $e(t_0) =e_0 $,
			and every \ltwo~unknown input $v(\cdot)$, there exists 
			$\beta_2(e_0,\|v(\cdot)\|_2)$ such that
			$$\|z(\cdot)\|_2  \le \beta_2(e_0,\| v (\cdot)\|_2)$$
			and 
			$$\beta_2(0, \| v (\cdot)\|_2) \le  \gamma \|v(\cdot)\|_2.$$
		\end{enumerate}
	\end{definition}

	\section{Proposed observers}
	With the augmented state $\xi$ given by \eqref{eq:aug_state},
	we obtain  the {\sf  augmented plant}:
	\begin{subequations}
		\label{eq:aug_sys}
		\begin{alignat}{10}
		\dot \xi &= \mathcal{A}\xi \;&+&\; \mathcal{B}_f f \;&+&\; {\tilde g}_{\xi} \;&+&\;\mathcal{B} v\\
		\label{eq:q}
		q&= \mathcal{C}_q \xi\;&+&\; \mathcal{D}_{qf} f \;&+&\; \tilde{g}_q \;&+&\; \mathcal{D}_{q} v\\
		y&= \mathcal{C } \xi	\;&+&\; \mathcal{D}_f f \;&+&\;\tilde{g}_y \;&+&\; \mathcal{D} v,
		\end{alignat}
	\end{subequations}
	where
	\[
	f(t, y, q) \triangleq 
	\begin{bmatrix}
	f_1(t,y,q_1) \\ f_2(t,q_2)
	\end{bmatrix}
	\,\qquad \tilde{g}_*(t,y)  = \begin{bmatrix}
	g_*(t,y) \\ 0
	\end{bmatrix}
	\]
	with $ q =
	\begin{bmatrix}
	q_1^\top& q_2^\top
	\end{bmatrix}^\top$
	and
	\begin{subequations}
		\label{eq:augGen}
		\begin{alignat}{2}
		{\mathcal A} &= \begin{bmatrix}
		A & B C_m \\ 0 & A_m
		\end{bmatrix},\,
		{\mathcal B}_f =\begin{bmatrix}
		B_f & BD_{wf} \\ 0 & B_{mf}
		\end{bmatrix},\\
		{\mathcal B} &=\begin{bmatrix}
		B D_m \\ B_m 
		\end{bmatrix},\,
		\mathcal C_{q}= \begin{bmatrix}
		C_{q_1} & D_{q_1} C_m \\ 0 & C_{q_2}
		\end{bmatrix}, \\		
		\mathcal D_{qf} &=\begin{bmatrix} D_{q_1f} & D_{q_1}D_{wf} \\ 0 & D_{q_2f} \end{bmatrix}, \, \mathcal{D}=DD_m		,\\
		{\mathcal D}_{q} &=\begin{bmatrix} D_{q_1} D_m \\ D_{q_2} \end{bmatrix},\,
		{\mathcal C} = \begin{bmatrix}
		C & DC_m
		\end{bmatrix},	\\							
		\mathcal{D}_f &= \begin{bmatrix}D_f & DD_{wf}\end{bmatrix}.
		\end{alignat}
	\end{subequations}
	
	In view of the above augmented plant, we propose the following { observer}:
	\begin{subequations}
	\label{eq:obs}
	\begin{alignat}{7}
		\dot{\hat \xi} &= \mathcal A \hat \xi \;&+&\; \mathcal B_f f (t, y, \hat q)	\;&+&\;  \tilde{g}_{\xi}(t, y) \;&+&\; L_1( \hat{y}-y)			\\
		\hat q &= \mathcal C_q \hat{\xi} \;&+&\; \mathcal{D}_{qf} f(t,y,\hat q) \;&+&\;  \tilde{g}_q(t,y) \;&+&\; L_2 ( \hat{y}-y)		\\
		\hat{y} &= \mathcal{C }\hat{\xi} \;&+&\; \mathcal{D}_f f(t,y, \hat{q})\;&+&\; \tilde{g}_y(t,y),\;&{}&\;
	\end{alignat}
	\end{subequations}
	where $\hat \xi$ is an estimate of the augmented state $\xi$.
	Basically,  the proposed observer is a copy  of the augmented plant along with two correction terms $L_1( \hat{y}-y)$ and $L_2 ( \hat{y}-y)$.
	
	{\sf Observer gains}  $L_1$, $L_2$ that yield the desired performance  can be obtained using the  following result.
	
	\begin{theorem}
		\label{thm:1}
		Consider the augmented plant~\eqref{eq:aug_sys}
		along with performance output given by~\eqref{eq:perf_out}.
		Suppose  that there exist matrices $\mathcal P=\mathcal P^\top \succ 0$, $\mathcal Y$,  $L_2$, 
		an incremental multiplier matrix $M$ for    $f$,  and    scalars $\alpha, \mu_1 >0, \mu_2\ge  0$ such that
		\begin{align}
		\label{eq:thm1a}
		\Xi + \Gamma^\top M \Gamma \preceq 0			
		\end{align}
		where 
		\[
		\Xi= \begin{bmatrix}
		\Phi_\alpha +\mu_1H^\top H
		& \mathcal P \mathcal B_f + \mathcal Y \mathcal D_f& - \mathcal P \mathcal B- \mathcal Y \mathcal D\\
		\star & 0 & 0 \\
		\star &\star &-\mu_2I
		\end{bmatrix},
		\]
		with
		\begin{equation}
		\label{eq:PhiAlpha}
		\Phi_\alpha = \mathcal P \mathcal A + \mathcal A^\top \mathcal P + \mathcal Y \mathcal C + \mathcal C^\top \mathcal Y^\top\!+\!2\alpha \mathcal P,
		\end{equation}
		and
		\begin{equation*}
		\Gamma = \begin{bmatrix}
		\mathcal C_q + L_2 \mathcal C &  \mathcal{D}_{qf} & -\mathcal{D}_q - L_2 \mathcal{D}  \\
		0 &  I & 0 
		\end{bmatrix}.
		\end{equation*}
		Consider now observer~\eqref{eq:obs} with gains 
		\begin{equation}
		L_1 =  \mathcal P^{-1} \mathcal Y
		\end{equation}
		and $L_2$.
		Then, for any initial condition $e(t_0) = e_0$  with $t_0\in \mathbb{R}$ and $e_0 \in \mathbb{R}^{n_e}$ and any $\mathcal{L}_2$ exogenous input $v(\cdot):[t_0, \infty) \rightarrow \mathbb{R}^{n_v}$,
		\begin{align}
		\label{eq:L_2}
		\|z(\cdot)\|_2 \le \sqrt{\beta_2/\mu_1} \|e_0\| + \sqrt{\mu_2/\mu_1} \|v(\cdot)\|_2
		\end{align}
		and
		\begin{align}
		\label{eq:eBound}
		\|e(t)\|   \le e^{- \alpha (t-t_0)}\sqrt{\beta_1/\beta_2} \|e_0\| +\sqrt{\beta_1/\mu_2} \|v(\cdot)\|_2		
		\end{align}
		for all $t\ge t_0$.
		Hence the  error dynamics with   performance output $z = He$  are \ltwo-stable with performance level 
		\begin{align}
		\gamma =\sqrt{\mu_2/\mu_1}
		\end{align}
	\end{theorem}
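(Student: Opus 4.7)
The plan is to construct a quadratic storage function $V(e) = e^\top \mathcal P e$ and derive from the LMI \eqref{eq:thm1a} a dissipation inequality of the form $\dot V + 2\alpha V + \mu_1 \|z\|^2 \leq \mu_2 \|v\|^2$, from which both bounds stated in the theorem follow by standard integration arguments.

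First, I would derive the error dynamics. Subtracting the plant equations \eqref{eq:aug_sys} from the observer equations \eqref{eq:obs}, the $(t,y)$-dependent correction terms $\tilde g_\xi, \tilde g_q, \tilde g_y$ cancel identically, and the shared $y$-argument of the nonlinearity $f$ is common to the two copies. Writing $\Delta f = f(t,y,\hat q) - f(t,y,q)$, $\Delta q = \hat q - q$ and using $L_1 = \mathcal P^{-1}\mathcal Y$, the resulting equations are linear in $\chi := (e,\Delta f,v)$, and the pair $(\Delta q, \Delta f)$ is itself given by $\Gamma \chi$ for exactly the $\Gamma$ appearing in the theorem (the bottom row being the identity $\Delta f = \Delta f$).

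Second, a direct calculation of $\dot V$ along the error dynamics, with $\mathcal P L_1 = \mathcal Y$, yields the identity
\[
\dot V + 2\alpha V + \mu_1 \|z\|^2 - \mu_2 \|v\|^2 = \chi^\top \Xi\, \chi,
\]
with $\Xi$ exactly as displayed in the theorem; the $\mathcal Y\mathcal C$, $\mathcal Y\mathcal D_f$, $\mathcal Y\mathcal D$ blocks come from substituting for $\mathcal P L_1$, while $\mu_1 H^\top H$ arises from $z=He$. Applying the iQC \eqref{eq:iqc} to the common-$(t,y)$ pair of arguments $(\hat q,q)$ of $f$ gives
$
\begin{bmatrix}\Delta q \\ \Delta f\end{bmatrix}^\top M \begin{bmatrix}\Delta q \\ \Delta f\end{bmatrix}
\geq 0,
$
which is $\chi^\top \Gamma^\top M \Gamma \chi$. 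Summing these two relations and using \eqref{eq:thm1a} produces the desired dissipation inequality.

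Third, I would extract the two bounds separately. Integrating the dissipation inequality from $t_0$ to $\infty$ and discarding $2\alpha V \geq 0$ yields $\mu_1 \|z\|_2^2 \leq V(e_0) + \mu_2 \|v\|_2^2$, and combining with eigenvalue bounds on $\mathcal P$ (identifying the constants $\beta_1,\beta_2$) and the subadditivity of $\sqrt{\cdot}$ gives \eqref{eq:L_2}. Separately, dropping $\mu_1\|z\|^2 \geq 0$ and multiplying by the integrating factor $e^{2\alpha(\cdot-t_0)}$ yields $V(e(t)) \leq e^{-2\alpha(t-t_0)} V(e_0) + \mu_2 \|v\|_2^2$, which after taking square roots produces \eqref{eq:eBound}. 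Properties (P1)--(P3) of Definition \ref{defn1} then follow: (P1) is the $v\equiv 0$ case of \eqref{eq:eBound}, (P2) is \eqref{eq:eBound} itself, and the $e_0=0$ specialization of \eqref{eq:L_2} delivers the gain $\gamma = \sqrt{\mu_2/\mu_1}$ in (P3). The main obstacle I anticipate is the algebraic bookkeeping in the second step: verifying that $\chi^\top \Xi \chi$ matches $\dot V + 2\alpha V + \mu_1\|z\|^2 - \mu_2\|v\|^2$ demands careful expansion of every cross term produced by $L_1 = \mathcal P^{-1}\mathcal Y$, and the zero $(2,2)$ block of $\Xi$ means that all $\Delta f$-bilinear contributions must be absorbed through $\Gamma^\top M \Gamma$ rather than through $\Xi$, so the definition of $\Gamma$ must be carefully reconciled with what the error dynamics actually produce; once that identification is confirmed, the remainder is a textbook dissipativity argument.
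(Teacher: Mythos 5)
Your proposal is correct and follows essentially the same route as the paper: the paper's proof also takes $V(e)=e^\top\mathcal P e$, derives the error dynamics by subtracting the augmented plant from the observer (so the $\tilde g$ terms cancel and $(\Delta q,\Delta f)$ is generated by $\Gamma$ acting on $(e,\Delta f,v)$), pre- and post-multiplies the LMI by that vector, invokes the incremental quadratic constraint to discard the $\Gamma^\top M\Gamma$ term, and integrates the resulting dissipation inequality exactly as you describe — the only difference being that the paper packages these steps into two intermediate lemmas (a general Lyapunov/$\mathcal L_2$ lemma and a quadratic-form lemma) rather than doing the computation inline. The bookkeeping concern you flag at the end is well placed but harmless here.
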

	A proof is given in Section \ref{sec:Th1Proof}.
	
	\begin{remark}
		Note that, with $\alpha$ and $L_2$ fixed,  
		the matrix inequalities in Theorem~\ref{thm:1} are linear in $\mathcal Y$, $\mathcal P$, $M$, and $\mu_1, \mu_2$. Only the \textit{structure} of $M$ has to be determined \textit{a priori} for the given nonlinearity $f$; its exact value is obtained by solving the LMI~\eqref{eq:thm1a}.
	\end{remark}
	\begin{remark}
		Although in the inequality~\eqref{eq:thm1a} we require $L_2$ to be fixed, the problem can be reposed with 
		variable $L_2$. In fact, the entirety of Section~IV in~\cite{Chakrabarty2016} is devoted to computing $L_1$ and $L_2$ simultaneously using convex programming, by exploiting the structure of the incremental multiplier matrices for the given nonlinearity.
	\end{remark}
	
	\begin{remark}\label{rk:optimal_mu}
		To get optimal estimation performance, one can  let $\mu_1 =1$ and  formulate the generalized eigenvalue problem
		\begin{equation}\label{eq:gevp}
		\mu_2^\star = \arg\min \; \mu_2 \quad \text{subject to:~\eqref{eq:thm1a}}
		\end{equation}
		to obtain a minimal $\gamma$ while line searching over $\alpha$ in some bounded set $[0, \alpha_{\max}]$.
	\end{remark}
	\begin{remark}
		Recall the class of inputs discussed in Remark \ref{rem:boundedDer}.
		Recalling  Definition~\ref{defn1}~(P3), it follows from Theorem~\ref{thm:1}  that, for zero initial state, a proposed observer results in
		\[
		\|z(\cdot)\|_2 \le \gamma \|\dot{w}(\cdot)\|_2.
		\]
		Thus, $w$ can be unbounded.
		The bound on $\|\dot w(\cdot)\|_2$ does not explicitly need to be known by the designer in order to construct the observer. However, if known, then a  bound on the performance output can be calculated.
	\end{remark}

	\subsection{Existence of observers with desired $\mathcal{L}_2$ performance}
	
	Here, we  present conditions which guarantee  the existence of observers whose error dynamics are \ltwo-stable. 
	
	\begin{lemma}\label{thm:finite_gamma_nonlin}
		Suppose that there exist matrices $\mathcal P=\mathcal P^\top\succ 0$, $\mathcal Y$,  $L_2$, 
		an incremental multiplier matrix $M$ for $f$, and a scalar $\bar{\alpha} >0$  such that
		\begin{equation}
		\label{eq:barPhiLMI}
		\begin{bmatrix}
		\Phi_{\bar \alpha}&	\mathcal P\mathcal B_f	+\mathcal{Y}\mathcal{D}_f\\
		\star		&0
		\end{bmatrix}
		+\bar \Gamma^\top  M \bar \Gamma \preceq 0,
		\end{equation}
		and
		\begin{equation*}
		\bar \Gamma = \begin{bmatrix}
		\mathcal C_q +  L_2 \mathcal C & \mathcal{D}_{qf} \\ 0 & I
		\end{bmatrix}.
		\end{equation*}
		Then, for any performance output $z=He$ and any positive  ${\alpha} <\bar{\alpha}$,  there exist  positive scalars $\mu_1, \mu_2$  such that 
		\eqref{eq:thm1a} holds.
	\end{lemma}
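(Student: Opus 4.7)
The plan is to exploit the slack created by choosing $\alpha<\bar\alpha$, and to drive $\mu_2$ large so that a Schur complement eliminates the exogenous-input block from Theorem~\ref{thm:1}'s LMI.

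Set $\epsilon:=2(\bar\alpha-\alpha)>0$, so that $\Phi_\alpha=\Phi_{\bar\alpha}-\epsilon\mathcal{P}$. Decompose the wide matrix as $\Gamma=[\,\bar\Gamma\mid\tilde\gamma\,]$, where $\tilde\gamma:=\begin{bmatrix}-\mathcal{D}_q-L_2\mathcal{D}\\0\end{bmatrix}$ is the column corresponding to $v$. Grouping the $v$-block separately partitions Theorem~\ref{thm:1}'s LMI as
\[
\Xi+\Gamma^\top M\Gamma=\begin{bmatrix}T_{11}(\mu_1)&T_{12}\\ T_{12}^\top&T_{22}(\mu_2)\end{bmatrix},
\]
with $T_{22}(\mu_2)=-\mu_2 I+\tilde\gamma^\top M\tilde\gamma$, $T_{12}$ independent of $(\mu_1,\mu_2)$, and
\[
T_{11}(\mu_1)=N_{\bar\alpha}+\begin{bmatrix}\mu_1 H^\top H-\epsilon\mathcal{P}&0\\ 0&0\end{bmatrix},
\]
where $N_{\bar\alpha}$ denotes the left-hand side of \eqref{eq:barPhiLMI}. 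This rewriting is the engine of the proof: the entire difference between the two LMIs is concentrated in an explicit perturbation of the $(1,1)$ block and a trivial Schur-complementable $v$-block.

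The main steps are then: (i) pick $\mu_1>0$ so small that $\mu_1 H^\top H\preceq\tfrac{\epsilon}{2}\mathcal{P}$, which is possible because $\mathcal{P}\succ0$; combined with the hypothesis $N_{\bar\alpha}\preceq0$ this yields $T_{11}(\mu_1)\preceq-\tfrac{\epsilon}{2}\,\mathrm{diag}(\mathcal{P},0)$; (ii) pick $\mu_2>\lambda_{\max}(\tilde\gamma^\top M\tilde\gamma)$ so $T_{22}(\mu_2)\prec0$ and apply Schur complement, reducing the full LMI to
\[
T_{11}(\mu_1)-T_{12}T_{22}(\mu_2)^{-1}T_{12}^\top\preceq0;
\]
(iii) since $T_{12}$ is fixed and $\|T_{22}(\mu_2)^{-1}\|\le(\mu_2-\lambda_{\max}(\tilde\gamma^\top M\tilde\gamma))^{-1}$, the positive semidefinite correction $-T_{12}T_{22}^{-1}T_{12}^\top$ has operator norm $O(1/\mu_2)$ and can therefore be driven arbitrarily small by enlarging $\mu_2$.

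The delicate part will be step (iii). The explicit slack $-\tfrac{\epsilon}{2}\,\mathrm{diag}(\mathcal{P},0)$ provides strict negativity only in the $\xi$-direction, yet the Schur correction may carry nonzero mass in the $f$-block. To absorb those $f$-block entries I would invoke the inherited structure of $N_{\bar\alpha}\preceq0$, which forces $\bar\Gamma_2^\top M\bar\Gamma_2\preceq0$ in the $(f,f)$ block (the well-posedness content of the IMM); this gives enough room to dominate the $O(1/\mu_2)$ correction in the $f$-block while the $\tfrac{\epsilon}{2}\mathcal{P}$ slack handles the $\xi$-block. Concluding $T_{11}(\mu_1)-T_{12}T_{22}(\mu_2)^{-1}T_{12}^\top\preceq0$ for all sufficiently large $\mu_2$ then verifies \eqref{eq:thm1a}.
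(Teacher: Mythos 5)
Your setup and steps (i)--(ii) are sound and follow the same strategy as the paper's proof: write $\Phi_\alpha=\Phi_{\bar\alpha}-\epsilon\mathcal P$, Schur-complement out the $v$-block, shrink $\mu_1$, enlarge $\mu_2$, and absorb what remains into the slack $\epsilon\mathcal P$. The gap is in step (iii), and your proposed patch does not close it. From $N_{\bar\alpha}\preceq 0$ and $\mu_1H^\top H\preceq\tfrac{\epsilon}{2}\mathcal P$ you only get $T_{11}(\mu_1)\preceq -\tfrac{\epsilon}{2}\,\mathrm{diag}(\mathcal P,0)$, so $T_{11}(\mu_1)$ is merely negative \emph{semi}definite: its kernel contains every vector $(0,u_f)$ with $u_f\in\ker\bigl(\bar\Gamma_2^\top M\bar\Gamma_2\bigr)$, where $\bar\Gamma_2=\left[\begin{smallmatrix}\mathcal D_{qf}\\ I\end{smallmatrix}\right]$ is the second block column of $\bar\Gamma$. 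That kernel is nontrivial for typical sign-indefinite multipliers; e.g.\ $M=\kappa\left[\begin{smallmatrix}0&1\\1&0\end{smallmatrix}\right]$ with $\mathcal D_{qf}=0$ (the paper's own example) gives $\bar\Gamma_2^\top M\bar\Gamma_2=0$. A necessary condition for $T_{11}-T_{12}T_{22}(\mu_2)^{-1}T_{12}^\top\preceq 0$ is $\ker T_{11}\subseteq\ker T_{12}^\top$, and since $T_{12}$ does not depend on $\mu_2$, no choice of $\mu_2$, however large, creates this inclusion: if $(0,u_f)\in\ker T_{11}(\mu_1)$ but $\tilde\gamma^\top M\bar\Gamma_2\,u_f\neq0$, the corrected quadratic form is strictly positive at $(0,u_f)$ for every $\mu_2$. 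Your appeal to ``$\bar\Gamma_2^\top M\bar\Gamma_2\preceq0$ gives enough room'' is backwards: semidefiniteness gives \emph{zero} room in its kernel directions, and an $O(1/\mu_2)$ positive semidefinite perturbation that does not vanish there can never be dominated.

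What must actually be shown is that the cross term $\bar\Gamma^\top M\tilde\gamma$ annihilates $\ker T_{11}(\mu_1)$, or else the Schur correction must land entirely in the $\xi$-block where the $\tfrac{\epsilon}{2}\mathcal P$ slack lives. The paper takes the latter route: it Schur-complements with $N=\mu_2I-\tilde\gamma^\top M\tilde\gamma$ and with the off-diagonal block taken to be only $\Xi_{13}=\mathcal P\mathcal B+\mathcal Y\mathcal D$ (supported purely on the $\xi$-row), so the whole correction $\Xi_{13}N^{-1}\Xi_{13}^\top$ sits in the $(1,1)$ block and is absorbed by $2(\bar\alpha-\alpha)\mathcal P$. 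That computation silently drops exactly the coupling $\bar\Gamma^\top M\tilde\gamma$ you flagged, i.e.\ it is an exact equivalence only when that coupling vanishes (as in Theorem~\ref{thm:uio_reject}, where $\mathcal D_q+L_2\mathcal D=0$ forces $\tilde\gamma=0$). So you have correctly isolated the one genuinely delicate term, but your resolution of it is not valid as written; closing the argument requires either strict negativity of $\bar\Gamma_2^\top M\bar\Gamma_2$ or a hypothesis forcing $\tilde\gamma^\top M\bar\Gamma_2$ to vanish on $\ker\bigl(\bar\Gamma_2^\top M\bar\Gamma_2\bigr)$.
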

	
	\begin{proof}
		Suppose 
		\eqref{eq:barPhiLMI} holds.
		Choosing any positive $\alpha < \bar{\alpha}$,
		there exist positive scalars $\mu_1, \mu_2 $ such that
		$N \succ 0$ and 
		
		\begin{equation}
		\label{eq:Ineqmu1}
		\mu_1H^\top H + \Xi_{13}N^{-1}\Xi_{13}^\top  \preceq 2(\bar{\alpha} -\alpha)\mathcal{P},
		\end{equation}
		where
		\begin{equation}
		\label{eq:N}
		N=\mu_2 I -\begin{bmatrix} \mathcal{D}_q +L_2\mathcal{D}\\0\end{bmatrix}^\top M\begin{bmatrix} \mathcal{D}_q +L_2\mathcal{D} \\ 0\end{bmatrix},
		\end{equation}
		and 
		$\Xi_{13} = \mathcal{P}\mathcal{B}+\mathcal{Y}\mathcal{D}$.
		Using Schur complements,~\eqref{eq:thm1a}  is equivalent to 
		\begin{equation}
		\label{eq:PhiLMI}
		\begin{bmatrix}
		\Phi_\alpha +	\mu_1 H^\top H + \Xi_{13}N^{-1}\Xi_{13}^\top	&	\mathcal P\mathcal B_f	+\mathcal{Y}\mathcal{D}_f\\
		\star		&0
		\end{bmatrix}
		+\bar \Gamma^\top  M \bar \Gamma \preceq 0,
		\end{equation}
		It follows from \eqref{eq:Ineqmu1} that
		\[
		2\alpha \mathcal{P} +\mu_1H^\top H  +\Xi_{13}N^{-1}\Xi_{13}^\top \preceq 2 \bar {\alpha} \mathcal P.
		\]
		Thus, $$\Phi_\alpha +	\mu_1 H^\top H + \Xi_{13}N^{-1}\Xi_{13}^\top\preceq \Phi_{\bar \alpha}$$ and
		\eqref{eq:thm1a} holds.
	\end{proof}
	
	In characterizing a solution to a problem in terms of LMI's one must show that the LMI's are feasible for a significant class of systems.
	Here we show that this is the case for the LMIs presented here.
	For example, consider the case in which $\mathcal{D}_{qf} =0$ and $f$ is globally Lipschitz in the sense that
	$$\|f(t,y,\tilde{q})-f(t,y,q)\| \le \kappa \|\tilde{q}-q\|$$ for all $t,y,q,\tilde{q}$ for some $\kappa >0$.
	Here we claim that if $\kappa$ is sufficiently small then,  then there is a solution to LMI
	\eqref{eq:barPhiLMI} if 
	$(A,C)$ and $(A_m, C_m)$ are detectable and 
	the following condition is satisfied.
	\begin{condition}
		\label{cond1}
		The matrix 
		\begin{equation}
		\label{eq:condn_all_lambdam}
		\begin{bmatrix}
		A - \lambda I & B \\ C&D
		\end{bmatrix} 
		\end{equation}
		has full column rank
		for every eigenvalue $\lambda$ of $A_m$ with non-negative real part.
	\end{condition}
	Recall that 
	a pair $(C,A)$ is detectable if
	the matrix
	\[
	\mathrm{rank}\, \begin{bmatrix}
	A-\lambda I \\ C 
	\end{bmatrix} 
	\]
	has full column rank for every $\lambda \in \mathbb{C}$ with non-negative real part.
	
	To prove the above claim, we first note that an incremental  multiplier matrix for $f$ is given by
	$$
	M=
	\left[\begin{array}{cc}
	I	&0\\
	0	&-\kappa^{-2} I
	\end{array}\right]
	$$
	and, with $L_2 =0$,  \eqref{eq:barPhiLMI} reduces to
	\[
	\begin{bmatrix}
	\Phi_{\bar \alpha}
	& \mathcal P\mathcal B_f +\mathcal{Y}\mathcal{D}_f +  \mathcal{C}_q^\top \mathcal{C}_q \\
	\star & -\kappa^{-2}I
	\end{bmatrix}
	\preceq 0
	\]
	where $\Phi_{\bar \alpha}$  is given by \eqref{eq:PhiAlpha}.
	This is equivalent to
	\[
	\Phi_{\bar \alpha} +\kappa^2(\mathcal P\mathcal B_f +\mathcal{Y}\mathcal{D}_f +  \mathcal{C}_q^\top\mathcal{C}_q)(\mathcal P\mathcal B_f +\mathcal{Y}\mathcal{D}_f +  \mathcal{C}_q^\top\mathcal{C}_q)^\top \preceq 0
	\]
	If $\Phi_{\bar \alpha} \prec 0$, the above inequality is satisfied when $\kappa >0$ is sufficiently small.
	It follows from Lemmas \ref{lem:detect} and \ref{lem:detect1} (given later) that, if 
	$(A,C)$ and $(A_m, C_m)$ are detectable and Condition \ref{cond1} holds
	then, there exist matrices  $\mathcal P=\mathcal P^\top\succ 0$ and $\mathcal Y$  such that $\Phi_{\bar \alpha} \prec 0$.%
	
	\subsection{Estimating  with  arbitrarily small error}
	Here, we provide conditions which guarantee that one can estimate the plant state and exogenous input to any arbitrary accuracy, that is, 
	for  any performance output $z=He$, one can achieve  any desired level of performance $\gamma >0$.
	The result also provides 
	a method of computing observer gain matrices $L_1$ and $L_2$ to achieved the desired performance.
	
	\begin{theorem}
		\label{thm:uio_reject}
		Suppose  there exist matrices $\mathcal P=\mathcal P^\top \succ 0 $, $\mathcal Y$,  $L_2$, $\mathcal{F}$ an incremental multiplier matrix $M$ for $f$,
		and   a positive scalar $\bar{\alpha}$  such that \eqref{eq:barPhiLMI}
		holds and
		\begin{subequations}
			\begin{align}
			\label{eq:lmi_nonlin_mu_indep_f2}
			\mathcal{P}\mathcal{B}+\mathcal{Y}\mathcal{D} &=  \mathcal C^\top \mathcal F^\top \\
			\mathcal{D}_q +L_2\mathcal{D}  &=0										\label{eq:DqD}\\
			\mathcal{C}^\top\mathcal{F}^\top\mathcal{F}\mathcal{D} &=0, \qquad \mathcal{C}^\top \mathcal{F}^\top\mathcal{F}\mathcal{D}_f = 0 	\label{eq:C'D=0}
			\end{align}		
		\end{subequations}
		Consider any matrix  $H\in \mathbb{R}^{n_z\times n_x}$  and any performance level $\gamma>0$. Considering any positive $\alpha < \bar{\alpha}$, choose $\mu_1 >0$ to satisfy
		\begin{equation}
		\label{eq:mu2}
		\mu_1 H^\top H  \preceq  2(\bar{\alpha} -\alpha)\mathcal{P} 
		\end{equation}
		and choose $\zeta$ to satisfy
		\begin {equation}
		\label{eq:zeta}
		\zeta \ge   1/(2\mu_1 \gamma^2)
	\end{equation}
	Consider now  observer~\eqref{eq:obs} with gains  $L_2$ and
	\begin{equation}
	\label{eq:L1small}
	L_1 =  \mathcal P^{-1} \left(\mathcal Y - \zeta \mathcal C^\top \mathcal{F}^\top \mathcal F\right)
	\end{equation}
	Then, for any initial condition $e(t_0) = e_0$  with $t_0\in \mathbb{R}$ and $e_0 \in \mathbb{R}^{n_e}$ and any  ${\mathcal L}_2$ input $v(\cdot):[t_0, \infty) \rightarrow \mathbb{R}^{n_v}$,
	inequalities
	\eqref{eq:L_2}
	and
	\eqref{eq:eBound}
	hold
	for all $t\ge t_0$.
	Hence the  error dynamics with   performance output $z = He$ are \ltwo-stable with performance level 
	$\gamma$.
\end{theorem}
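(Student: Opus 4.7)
The plan is to deduce Theorem~\ref{thm:uio_reject} directly from Theorem~\ref{thm:1} by showing that the observer gain~\eqref{eq:L1small} is produced by applying Theorem~\ref{thm:1} with the modified slack variable $\tilde{\mathcal Y} \triangleq \mathcal Y - \zeta\,\mathcal C^\top \mathcal F^\top \mathcal F$ and the parameters $\mu_1$ (as chosen) and $\mu_2 \triangleq \mu_1\gamma^2$. With these choices, $L_1 = \mathcal P^{-1}\tilde{\mathcal Y}$ coincides with~\eqref{eq:L1small} and $\sqrt{\mu_2/\mu_1} = \gamma$, so the conclusions~\eqref{eq:L_2}, \eqref{eq:eBound} and the $\mathcal L_2$ performance level $\gamma$ will follow immediately from Theorem~\ref{thm:1} once its LMI~\eqref{eq:thm1a} is verified for $(\mathcal P,\tilde{\mathcal Y}, L_2, M, \alpha,\mu_1,\mu_2)$. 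So the entire proof reduces to checking that inequality.

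The key simplifications come from the hypotheses \eqref{eq:lmi_nonlin_mu_indep_f2}--\eqref{eq:C'D=0}. First, \eqref{eq:DqD} makes the last block column of $\Gamma$ vanish, so $\Gamma^\top M \Gamma$ reduces to a block-diagonal matrix whose nontrivial block is exactly $\bar\Gamma^\top M\bar\Gamma$. Second, \eqref{eq:C'D=0} implies that replacing $\mathcal Y$ by $\tilde{\mathcal Y}$ does not change the $(1,2)$ block $\mathcal P\mathcal B_f + \mathcal Y\mathcal D_f$ nor the $(1,3)$ block $-\mathcal P\mathcal B - \mathcal Y\mathcal D$. Third, \eqref{eq:lmi_nonlin_mu_indep_f2} rewrites that $(1,3)$ block as $-\mathcal C^\top\mathcal F^\top$. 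The only remaining effect of switching $\mathcal Y \mapsto \tilde{\mathcal Y}$ is in the $(1,1)$ block, which becomes $\Phi_\alpha - 2\zeta\,\mathcal C^\top\mathcal F^\top\mathcal F\mathcal C + \mu_1 H^\top H$.

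Next I would take a Schur complement against the $-\mu_2 I$ block (legitimate because $\mu_2>0$). This eliminates the last row/column and produces a $+\,\frac{1}{\mu_2}\mathcal C^\top\mathcal F^\top\mathcal F\mathcal C$ correction in the $(1,1)$ position. The condition \eqref{eq:zeta} gives $2\zeta \ge 1/\mu_2$, so the net contribution $-(2\zeta - 1/\mu_2)\,\mathcal C^\top\mathcal F^\top\mathcal F\mathcal C$ is negative semidefinite and may be dropped. Then~\eqref{eq:mu2} gives $\mu_1 H^\top H \preceq 2(\bar\alpha-\alpha)\mathcal P$, which converts the remaining $(1,1)$ block from $\Phi_\alpha + \mu_1 H^\top H$ into something dominated by $\Phi_{\bar\alpha}$. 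What remains is precisely the hypothesis~\eqref{eq:barPhiLMI}, so the Schur-reduced inequality holds, hence \eqref{eq:thm1a} holds, and Theorem~\ref{thm:1} supplies the desired bounds.

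The only nontrivial part is the algebraic bookkeeping in the two steps above: (i) recognising the particular shift $\tilde{\mathcal Y} - \mathcal Y = -\zeta\mathcal C^\top\mathcal F^\top\mathcal F$ needed to absorb the $-\mathcal C^\top\mathcal F^\top$ off-diagonal after a Schur complement, and (ii) tracking how conditions \eqref{eq:lmi_nonlin_mu_indep_f2}--\eqref{eq:C'D=0}, together with $\zeta \ge 1/(2\mu_1\gamma^2)$ and~\eqref{eq:mu2}, collaborate to produce exactly~\eqref{eq:barPhiLMI} with $\bar\alpha$ instead of $\alpha$. Once this verification is laid out, no further argument is required, because $\gamma$ enters the problem only through $\mu_2/\mu_1$, and any positive $\gamma$ is admissible provided $\zeta$ is scaled as in~\eqref{eq:zeta}; this is exactly what gives the ``arbitrarily small error'' conclusion.
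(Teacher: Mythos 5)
Your proposal is correct and follows essentially the same route as the paper's own proof: replace $\mathcal Y$ by $\tilde{\mathcal Y}=\mathcal Y-\zeta\,\mathcal C^\top\mathcal F^\top\mathcal F$, set $\mu_2=\mu_1\gamma^2$, use \eqref{eq:DqD} to reduce the multiplier term and the Schur complement against $-\mu_2 I$ to a $\mu_2^{-1}\mathcal C^\top\mathcal F^\top\mathcal F\mathcal C$ correction, absorb it via $2\zeta\ge 1/\mu_2$ together with \eqref{eq:mu2} into $\Phi_{\bar\alpha}$, and invoke Theorem~\ref{thm:1}. The bookkeeping you describe (invariance of the $(1,2)$ and $(1,3)$ blocks under the shift, courtesy of \eqref{eq:C'D=0} and \eqref{eq:lmi_nonlin_mu_indep_f2}) matches the paper's computation of $\Xi_{13}=\mathcal C^\top\mathcal F$ and $\tilde{\mathcal Y}\mathcal D_f=\mathcal Y\mathcal D_f$ exactly.
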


\begin{proof}
	Consider any matrix $H\in \mathbb{R}^{n_z\times n_x}$ and any scalar $\gamma >0$. 
	Letting  $\mu_2 =\gamma^2 \mu_1$, we have $\zeta  \ge   1/2\mu_2$ and 
	we now show that \eqref{eq:thm1a} holds with $\mathcal{Y}$  replaced with 
	\begin{equation}
	\tilde{\mathcal Y} = \mathcal Y  -\zeta  \mathcal C^\top \mathcal{F}^\top\mathcal{F}
	\end{equation} 
	We saw from the proof of Lemma \ref{thm:finite_gamma_nonlin} that~\eqref{eq:thm1a}  (with $\tilde{\mathcal Y}$ replacing  $\mathcal{Y}$) is equivalent to  
	\begin{equation}
	\label{eq:tildePhiLMI}
	\begin{bmatrix}
	\tilde{\Phi} 	&	\mathcal P\mathcal B_f	+\tilde{\mathcal{Y}}\mathcal{D}_f\\
	\star		&0
	\end{bmatrix}
	+\bar \Gamma^\top  M \bar \Gamma \preceq 0,
	\end{equation}
	where
	\begin{align}
	\tilde{\Phi} &=\mathcal P \mathcal A + \mathcal A^\top \mathcal P + \tilde{\mathcal Y} \mathcal C + \mathcal C^\top \tilde{\mathcal Y}^\top + 2\alpha \mathcal P 
	+\mu_1H^\top H +  \Xi_{13}N^{-1}\Xi_{13}^\top  \nonumber   \\
	&\le \mathcal P \mathcal A + \mathcal A^\top \mathcal P + {\mathcal Y} \mathcal C + \mathcal C^\top\mathcal Y^\top -2\zeta\mathcal{C}^\top\mathcal{F}^\top\mathcal{F}\mathcal{C}
	+ 2\bar{\alpha} \mathcal P 
	+  \Xi_{13}N^{-1}\Xi_{13}^\top					\nonumber\\
	&= \Phi_{\bar \alpha} -2\zeta\mathcal{C}^\top\mathcal{F}^\top\mathcal{F}\mathcal{C}
	+  \Xi_{13}N^{-1}\Xi_{13}^\top 	
	\end{align}
	with
	$\Phi_{\bar \alpha}$   given by \eqref{eq:PhiAlpha}
	and 
	\begin{align*}
	\Xi_{13} &= \mathcal{P}\mathcal{B}+\tilde{\mathcal Y}\mathcal{D}  \\
	&= \mathcal{P}\mathcal{B}+\mathcal{Y}\mathcal{D} -\xi\mathcal{C}^\top\mathcal{F}^\top\mathcal{F}\mathcal{D}\\ &=\mathcal{C}^\top\mathcal{F}
	\end{align*}
	The last two equalities follow from \eqref{eq:C'D=0} and \eqref{eq:lmi_nonlin_mu_indep_f2}.
	Also, using \eqref{eq:N} and \eqref{eq:DqD}, $N=\mu_2 I$.
	Note that 
	\begin{equation}
	\label{eq:YDf}
	\tilde{ \mathcal Y}\mathcal{D}_f = { \mathcal Y}\mathcal{D}_f  -\zeta \mathcal{C}^\top  \mathcal{F}^\top \mathcal{F}\mathcal{D}_f = { \mathcal Y}\mathcal{D}_f
	\end{equation}
	We now obtain that
	\begin{align*}
	\Xi_{13}N^{-1}\Xi_{13}^\top &= \mu_2^{-1} \mathcal{C}^\top \mathcal F^\top \mathcal{F}\mathcal{C}\\
	&\le 
	2\zeta \mathcal{C}^\top \mathcal F^\top \mathcal{F}\mathcal{C}
	\end{align*}
	and
	$\tilde{\Phi} \preceq \Phi_{\bar \alpha}$.
	It now follows from \eqref{eq:barPhiLMI} and \eqref{eq:YDf} that \eqref{eq:thm1a} holds.
	The proof is completed by invoking Theorem~\ref{thm:1}.
\end{proof}

\begin{remark}
	Theorem~\ref{thm:uio_reject} implies that, for any  $H$,
	$H\xi$
	can be estimated to arbitrary accuracy. 
	That is, for any given $\varepsilon>0$ there exists a corresponding observer of the form~\eqref{eq:obs} that is \ltwo-stable with performance level 
	$\varepsilon/\|v(\cdot)\|_2$. 
	From Definition~\ref{defn1}, we deduce that, for zero initial state,
	$
	\|H e(\cdot)\|_2 \le \varepsilon$.
\end{remark}

\section{Linear error dynamics}
Consider plant \eqref{eq:plant2} with $f_1=0$ and disturbance model \eqref{eq:internal_model_w} with $f_2 =0$, that is,
\begin{subequations}
	\label{eq:plant2lin}
	\begin{align}
	\dot x &= A x + g_x(t,y) +B w, \;\; y= C x+  g_y(t,y) + D w,
	\end{align}
\end{subequations}
and
\begin{subequations}
	\label{eq:internal_model_wlin}
	\begin{align}
	\dot x_m &= A_m x_m  + B_m v,\;\; w = C_m x_m  + D_m v.
	\end{align}
\end{subequations}

The  corresponding observer~\eqref{eq:obs} simplifies to
\begin{subequations}
	\label{eq:obs_lin}
	\begin{align}
	\dot{\hat \xi} &= \mathcal A\hat{\xi} + \tilde{g}_\xi(t, y) + L_1(\hat y - y)\\
	\hat{y} & = \mathcal{C}\xi + \tilde{g}_y(t,y).
	\end{align}
\end{subequations}
which only involves the observer gain matrix $L_1$. The error dynamics resulting from this  observer are described by
\begin{equation}
\label{eq:err_dyn_lin}
\dot e = (\mathcal A + L_1\mathcal C) e - (\mathcal B+L_1 \mathcal{D}) v.
\end{equation}

Herein, we obtain simple conditions guaranteeing the existence of  an observer gain $L_1$  which yields the desired behavior.
First we need a preliminary lemma.

\begin{lemma}
	\label{lem:detect}
	A pair $(\mathcal{A}, \mathcal{C})$ is detectable if and only if 
	there are matrices $\mathcal P = \mathcal P^\top  \succ 0$, $\mathcal{Y}$ and a scalar $\bar{\alpha} >0$ such that
	\begin{equation}
	\label{eq:obsLinLMI}
	\Phi_{\bar \alpha} \prec 0
	\end{equation}
	where $\Phi_{\bar \alpha}$ is given by \eqref{eq:PhiAlpha}.
\end{lemma}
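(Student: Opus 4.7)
My plan is to prove the two directions separately, using the standard equivalence between Lyapunov inequalities and stability (extended to exponential stability with decay rate $\bar\alpha$).

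For the ``if'' direction, suppose $\mathcal{P}\succ 0$, $\mathcal{Y}$, and $\bar\alpha>0$ satisfy $\Phi_{\bar\alpha}\prec 0$. Define the gain $L=\mathcal{P}^{-1}\mathcal{Y}$; then $\mathcal{Y}\mathcal{C}+\mathcal{C}^\top\mathcal{Y}^\top = \mathcal{P}L\mathcal{C}+\mathcal{C}^\top L^\top\mathcal{P}$, so the LMI becomes
\[
\mathcal{P}(\mathcal{A}+L\mathcal{C})+(\mathcal{A}+L\mathcal{C})^\top\mathcal{P}+2\bar\alpha\mathcal{P}\prec 0 .
\]
This is a Lyapunov inequality certifying that $\mathcal{A}+L\mathcal{C}$ is Hurwitz (indeed that every eigenvalue has real part less than $-\bar\alpha$), so the pair $(\mathcal{A},\mathcal{C})$ is detectable by definition.

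For the ``only if'' direction, assume $(\mathcal{A},\mathcal{C})$ is detectable. Then there exists a matrix $L$ such that $\mathcal{A}+L\mathcal{C}$ is Hurwitz. Pick $\bar\alpha>0$ strictly smaller than $\min_i |\mathrm{Re}\,\lambda_i(\mathcal{A}+L\mathcal{C})|$, so that $\mathcal{A}+L\mathcal{C}+\bar\alpha I$ remains Hurwitz. Standard Lyapunov theory then yields a matrix $\mathcal{P}=\mathcal{P}^\top\succ 0$ satisfying
\[
(\mathcal{A}+L\mathcal{C})^\top\mathcal{P}+\mathcal{P}(\mathcal{A}+L\mathcal{C})+2\bar\alpha\mathcal{P}\prec 0 ,
\]
for example by choosing $\mathcal{P}$ as the unique positive-definite solution of the Lyapunov equation $(\mathcal{A}+L\mathcal{C}+\bar\alpha I)^\top\mathcal{P}+\mathcal{P}(\mathcal{A}+L\mathcal{C}+\bar\alpha I)=-I$. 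Setting $\mathcal{Y}=\mathcal{P}L$ recovers exactly $\Phi_{\bar\alpha}\prec 0$.

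I do not expect a real obstacle here; the argument is essentially the classical bijection between solutions of the strict Lyapunov inequality for $\mathcal{A}+L\mathcal{C}$ and of the observer-synthesis LMI obtained by the change of variable $\mathcal{Y}=\mathcal{P}L$. The only care needed is to keep the decay-rate term $2\bar\alpha\mathcal{P}$ consistent on both sides: in the forward direction one chooses $\bar\alpha$ small enough that a shifted Hurwitz property still holds, and in the reverse direction one exploits the $2\bar\alpha\mathcal{P}\succ 0$ term to conclude strict Hurwitzness of $\mathcal{A}+L\mathcal{C}$.
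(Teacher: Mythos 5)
Your proof is correct and follows essentially the same route as the paper: the change of variables $\mathcal{Y}=\mathcal{P}L$ together with classical Lyapunov theory for $\mathcal{A}+L\mathcal{C}$. The only cosmetic difference is the order of operations in the ``only if'' direction (you shift by $\bar\alpha I$ first and then solve the Lyapunov equation, whereas the paper solves the Lyapunov equation for $\mathcal{A}+L\mathcal{C}$ and then picks $\bar\alpha$ small enough that $\bar\alpha\mathcal{P}\prec I$); both choices are valid.
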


\begin{proof}
	Detectability of  $(\mathcal{A}, \mathcal{C})$   is equivalent to the existence of a matrix $L_1$ such
	that  $\mathcal{A} + L_1\mathcal{C}$ is Hurwitz, that is, all of its eigenvalues have negative real part.
	By Lyapunov theory this is   equivalent to the  existence of 
	a matrix ${\mathcal P}={\mathcal P}^\top  > 0$ such that
	\[
	\mathcal P (\mathcal{A} + L_1\mathcal{C}) + (\mathcal{A} + L_1\mathcal{C} )^\top  \mathcal P  +2 I = 0.
	\]
	Choosing $\bar\alpha >0$ such that $\bar\alpha \mathcal P \prec I$ results in
	\begin{equation}
	\label{eq:Lyap1}
	\mathcal P (\mathcal{A} + L_1\mathcal{C}+ \bar\alpha I ) + (\mathcal{A} + L_1\mathcal{C}+ \bar\alpha I )^\top  \mathcal P   \prec 0,
	\end{equation}
	that is,
	\eqref{eq:obsLinLMI}
	with $\mathcal{Y} = \mathcal{P}L_1$.
	Conversely, if \eqref{eq:Lyap1} holds, then, by Lyapunov theory, $\mathcal{A} + L_1\mathcal{C}$ is Hurwitz .
\end{proof}

\begin{lemma}
	\label{lem:detect1}
	Suppose that  $(A, C)$ and $(A_m, C_m)$ are detectable
	and Condition \ref{cond1} holds.
	Then   $(\mathcal{A}, \mathcal{C})$  is detectable.
\end{lemma}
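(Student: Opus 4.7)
The plan is to use the Popov-Belevitch-Hautus (PBH) test for detectability and check the rank of
\[
\begin{bmatrix} \mathcal A - \lambda I \\ \mathcal C \end{bmatrix}
=
\begin{bmatrix} A - \lambda I & BC_m \\ 0 & A_m - \lambda I \\ C & DC_m \end{bmatrix}
\]
for each $\lambda \in \mathbb{C}$ with $\mathrm{Re}(\lambda)\ge 0$. I would suppose this matrix is applied to a nonzero vector $\bigl[x^\top\; x_m^\top\bigr]^\top$ giving zero, and then show $x=0$ and $x_m=0$ by splitting into two cases according to whether $\lambda$ is an eigenvalue of $A_m$.

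In the first case, $\lambda$ is not an eigenvalue of $A_m$. Then $(A_m-\lambda I)x_m=0$ forces $x_m=0$, so the remaining equations reduce to $(A-\lambda I)x=0$ and $Cx=0$. Detectability of $(A,C)$ and $\mathrm{Re}(\lambda)\ge 0$ then give $x=0$. In the second case, $\lambda$ is an eigenvalue of $A_m$ with $\mathrm{Re}(\lambda)\ge 0$. Setting $w\triangleq C_m x_m$, the first and third block equations become
\[
\begin{bmatrix} A - \lambda I & B \\ C & D \end{bmatrix}\begin{bmatrix} x \\ w \end{bmatrix} = 0,
\]
so Condition~\ref{cond1} forces $x=0$ and $w = C_m x_m = 0$. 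Combining $C_m x_m = 0$ with $(A_m - \lambda I)x_m = 0$ and using detectability of $(A_m,C_m)$ at this unstable eigenvalue yields $x_m=0$.

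The key observation making the second case work is the algebraic identity that lets us bundle $BC_m x_m$ and $DC_m x_m$ into an auxiliary ``input'' $w = C_m x_m$, which is exactly the interface that Condition~\ref{cond1} is designed to handle. I do not expect any real obstacle: the only thing one has to be careful about is invoking detectability of $(A_m,C_m)$ only at eigenvalues of $A_m$ with non-negative real part, which is precisely where the PBH condition gives injectivity. One minor delicacy is that Lemma~\ref{lem:detect} is stated with a Lyapunov-style LMI, while the PBH test is what is naturally used here; since the PBH characterization of detectability is standard, I would simply invoke it directly rather than routing through Lemma~\ref{lem:detect}.
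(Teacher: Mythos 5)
Your proof is correct and follows essentially the same route as the paper: a PBH rank test on $\bigl[\begin{smallmatrix}\mathcal A-\lambda I\\ \mathcal C\end{smallmatrix}\bigr]$, with the coupling terms $BC_mx_m$ and $DC_mx_m$ bundled into the auxiliary input $C_mx_m$ so that Condition~\ref{cond1} applies, and detectability of $(A,C)$ and $(A_m,C_m)$ handling the remaining cases. The only difference is cosmetic: you split on whether $\lambda$ is an eigenvalue of $A_m$ and argue injectivity directly, whereas the paper splits on whether $x_m=0$ and argues contrapositively.
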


\begin{proof}
	The pair   $(\mathcal{A}, \mathcal{C})$ is detectable if  and only if
	\begin{equation*}
	\label{eq:rank_pf_model} 
	{H}(\lambda) =
	\begin{bmatrix}
	\mathcal{A}-\lambda I \\ \mathcal{C} 
	\end{bmatrix}
	=
	\begin{bmatrix}
	A - \lambda I & B C_m \\
	0 & A_m -\lambda I\\
	C & DC_m
	\end{bmatrix}
	\end{equation*}
	has full column rank
	for every  eigenvalue  $\lambda$ of $\mathcal{A}$ with non-negative real part.
	Note  that $\lambda$ is an eigenvalue of $\mathcal{A} $ if and only if it is  an eigenvalue of  $A$ or $A_m$.
	
	Suppose that $H(\lambda)$ does not have full column rank. Then there is a non-zero vector 
	$\xi = \begin{bmatrix}
	x^\top & x_m^\top 
	\end{bmatrix}^\top $ such that $H(\lambda)\xi = 0$, that is,
	\begin{align}
	(A- \lambda I)x + BC_m x_m&=0	\label{eqH=01}		\\
	Cx + DC_m x_m	&=0				\label{eqH=02} \\ 
	(A_m - \lambda I) x_m &=0.
	\label{eqH=03}
	\end{align}
	
	If $x_m=0$ then $x\neq 0$ and the above equations imply that
	$(A- \lambda I)x =0$ and $Cx=0$.
	Thus $\lambda$ is an eigenvalue of $A$  and
	the matrix 
	$\left[
	\begin{smallmatrix}
	A - \lambda I \\ C
	\end{smallmatrix} 
	\right]$
	does not have full column rank.
	Since $(C,A)$ is detectable,  the real part of $\lambda$ must be negative.

	If $x_m \neq 0$, equation~\eqref{eqH=03} implies that $\lambda$ is an eigenvalue of $A_m$.
	If $C_mx_m =0$ then	
	the matrix 
	$\left[
	\begin{smallmatrix}
	A _m- \lambda I \\ C_m
	\end{smallmatrix} 
	\right]$
	does not have full column rank.
	Since $(A_m, C_m)$ is detectable,  the real part of $\lambda$ must be negative.
	If $C_mx_m\neq 0$, equations~\eqref{eqH=01} and~\eqref{eqH=02} imply that
	\[
	\begin{bmatrix}
	A-\lambda I & B \\ C & D
	\end{bmatrix} \begin{bmatrix}
	x \\ C_mx_m
	\end{bmatrix} = 0.
	\]
	that is the matrix
	$
	\left[\begin{smallmatrix}
	A-\lambda I & B \\ C & D
	\end{smallmatrix} \right]
	$
	does have full column rank.
	Since $\lambda$ is an eigenvalue of $A_m$, $\lambda$ must have negative real part.
	
	Thus, we have shown that if  $H(\lambda)$ does not have full column rank then the real part of $\lambda$ is negative.
	Hence $H(\lambda)$ has full column rank whenever the real part of $\lambda$ is non-negative and $(\mathcal A, \mathcal C)$ is detectable.
\end{proof}

\begin{theorem}\label{thm:ACdetect_finite_gamma}
	Suppose that  $(A, C)$ and $(A_m, C_m)$ are detectable 
	and Condition \ref{cond1} holds
	Then, for any performance output $z=He$ there exists 
	an observer gain $L_1$ such that the observer error dynamics~\eqref{eq:err_dyn_lin} are \ltwo-stable with some performance level $\gamma$.
\end{theorem}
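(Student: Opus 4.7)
The plan is to reduce the theorem to a straightforward application of the tools already in place: Lemmas \ref{lem:detect} and \ref{lem:detect1} give detectability of the augmented pair, Lemma \ref{thm:finite_gamma_nonlin} promotes that to feasibility of the main LMI, and Theorem \ref{thm:1} then yields the $\mathcal{L}_2$ bounds. The essential point is that in the linear specialization of Section~4 the nonlinearity $f$ drops out, so the auxiliary matrices and the incremental multiplier matrix that complicate the general analysis can be taken in their trivial form.

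First I would apply Lemma \ref{lem:detect1} to conclude from detectability of $(A,C)$, detectability of $(A_m, C_m)$, and Condition \ref{cond1} that the augmented pair $(\mathcal{A},\mathcal{C})$ is detectable. Lemma \ref{lem:detect} then furnishes matrices $\mathcal{P} = \mathcal{P}^\top \succ 0$ and $\mathcal{Y}$, together with a scalar $\bar{\alpha} > 0$, such that $\Phi_{\bar{\alpha}} \prec 0$. Next I would verify \eqref{eq:barPhiLMI} for the linear system: since \eqref{eq:plant2lin}--\eqref{eq:internal_model_wlin} corresponds to $f_1 \equiv 0$ and $f_2 \equiv 0$, the augmented nonlinearity $f$ is identically zero and the matrices $\mathcal{B}_f$, $\mathcal{D}_{qf}$, $\mathcal{D}_f$ contribute nothing. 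Taking $L_2 = 0$ and the trivial incremental multiplier $M = 0$ (which satisfies the IQC vacuously), the left-hand side of \eqref{eq:barPhiLMI} becomes block-diagonal with $\Phi_{\bar{\alpha}}$ in the upper-left and zero in the lower-right, hence is negative semidefinite.

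With \eqref{eq:barPhiLMI} in hand, Lemma \ref{thm:finite_gamma_nonlin} produces, for any positive $\alpha < \bar{\alpha}$, scalars $\mu_1, \mu_2 > 0$ that make \eqref{eq:thm1a} hold; Theorem \ref{thm:1} then delivers $\mathcal{L}_2$-stability of the error dynamics \eqref{eq:err_dyn_lin} with performance level $\gamma = \sqrt{\mu_2/\mu_1}$ when $L_1 = \mathcal{P}^{-1}\mathcal{Y}$. I do not expect any serious obstacle here, since all the hard work has already been isolated in the earlier lemmas; the only mildly pedantic point is the justification that the ``$f$-blocks'' in \eqref{eq:barPhiLMI} may be zeroed out in the linear case. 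An even shorter route, should one wish to bypass the nonlinear machinery altogether, is to use detectability of $(\mathcal{A},\mathcal{C})$ to pick any $L_1$ rendering $\mathcal{A}+L_1\mathcal{C}$ Hurwitz and then appeal to the standard fact that a stable LTI system from $v$ to $z = He$ has finite $\mathcal{H}_\infty$ norm.
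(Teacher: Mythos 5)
Your argument is correct and follows essentially the same route as the paper's proof: Lemma \ref{lem:detect1} gives detectability of $(\mathcal{A},\mathcal{C})$, Lemma \ref{lem:detect} yields $\Phi_{\bar{\alpha}}\prec 0$, which with $\mathcal{B}_f=0$, $\mathcal{D}_f=0$ and $M=0$ is exactly \eqref{eq:barPhiLMI}, and then Lemma \ref{thm:finite_gamma_nonlin} and Theorem \ref{thm:1} finish the job. Your spelled-out justification that the $f$-blocks vanish in the linear case is just a more explicit version of the paper's one-line remark, so there is nothing further to add.
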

\begin{proof}
	Note that~\eqref{eq:barPhiLMI} of Lemma~\ref{thm:finite_gamma_nonlin} with $\mathcal B_f =0$ and $M=0$ is equivalent to
	\eqref{eq:obsLinLMI}.
	Hence, using Theorem \ref{thm:1}, Lemma~\ref{thm:finite_gamma_nonlin} and Lemma \ref{lem:detect} we only need to show that
	$(\mathcal A, \mathcal C)$ is detectable. This follows from Lemma \ref{lem:detect1}.
\end{proof}


\subsection{Estimating  with arbitrarily small error}

First we have the following result from \cite{corless98}.

\begin{lemma}
	\label{lem:trans2}
	Suppose $ \mathcal{A}\in \mathbb{R}^{n_\xi \times n_\xi}$, $ \mathcal{B}\in \mathbb{R}^{n_\xi \times n_w}$ and  $ \mathcal{C}\in \mathbb{R}^{n_y \times n_\xi}$.
	Then there exist matrices $\mathcal P= \mathcal P^\top \succ 0 $, $ \mathcal Y$ and    $\mathcal F$ 
	such that 
	\begin{align}
	\label{eq:obsLinLMI2at} 
	\mathcal P \mathcal A + \mathcal A^\top \mathcal P +\mathcal Y \mathcal C + \mathcal C^\top \mathcal Y^\top &\prec 0	\\
	\label{eq:obsLinLMI2bt} 
	\mathcal B^\top \mathcal P &= \mathcal F\mathcal C
	\end{align}
	if and only if
	\begin{equation}
	\label{eq:rankCBt}
	\mathrm{rank}\, \mathcal C\mathcal{B} = \mathrm{rank}\,  \mathcal{B}
	\end{equation}
	and
	\begin{equation}
	\label{eq:rankSystemt}
	\mathrm{rank}\,
	\begin{bmatrix}
	\mathcal{A} - \lambda I	&  \mathcal{B}\\
	\mathcal{C}	&  0
	\end{bmatrix}
	=n_\xi+  \textrm{rank}\, \mathcal{B}
	\end{equation}
	for all $\lambda \in \mathbb{C}$ with non-negative real part.	
\end{lemma}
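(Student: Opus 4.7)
The plan is to prove the biconditional in two directions, treating necessity and sufficiency separately. Necessity follows from direct algebraic manipulation of \eqref{eq:obsLinLMI2at} and \eqref{eq:obsLinLMI2bt}, while sufficiency requires a constructive argument exploiting the geometric structure of the triple $(\mathcal{A},\mathcal{B},\mathcal{C})$.

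For necessity, I would first right-multiply \eqref{eq:obsLinLMI2bt} by $\mathcal{B}$ to obtain $\mathcal{F}\mathcal{C}\mathcal{B}=\mathcal{B}^\top\mathcal{P}\mathcal{B}$. Since $\mathcal{P}\succ 0$, the right-hand side has rank equal to $\mathrm{rank}\,\mathcal{B}$, forcing $\mathrm{rank}\,\mathcal{C}\mathcal{B}\ge\mathrm{rank}\,\mathcal{B}$; the reverse inequality is trivial, which establishes \eqref{eq:rankCBt}. For \eqref{eq:rankSystemt}, I would argue that the nullspace of the pencil is exactly $\{0\}\times\ker\mathcal{B}$ for $\mathrm{Re}(\lambda)\ge 0$. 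Indeed, given $(\mathcal{A}-\lambda I)x+\mathcal{B}u=0$ and $\mathcal{C}x=0$, premultiply by $x^{*}\mathcal{P}$, add the conjugate, and use $\mathcal{B}^\top\mathcal{P}=\mathcal{F}\mathcal{C}$ together with $\mathcal{C}x=0$ to collapse the cross terms; this yields $x^{*}(\mathcal{P}\mathcal{A}+\mathcal{A}^\top\mathcal{P})x = 2\,\mathrm{Re}(\lambda)\,x^{*}\mathcal{P}x$. The LMI \eqref{eq:obsLinLMI2at}, evaluated on $\mathcal{C}x=0$, implies that the left-hand side is strictly negative for $x\neq 0$, which contradicts $\mathrm{Re}(\lambda)\ge 0$ unless $x=0$. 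Once $x=0$ we must have $\mathcal{B}u=0$, so the nullity of the pencil equals $n_w-\mathrm{rank}\,\mathcal{B}$, giving \eqref{eq:rankSystemt}.

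For sufficiency, the plan is constructive. First, use a nonsingular change of coordinates and a compatible input transformation to bring $(\mathcal{A},\mathcal{B},\mathcal{C})$ into a block-triangular normal form in which $\mathcal{B}$ is column-compressed and \eqref{eq:rankCBt} manifests as a relative-degree-one condition on a reduced output. In these coordinates, \eqref{eq:rankSystemt} translates into detectability of the zero dynamics subsystem (no invariant zeros in the closed right half-plane). Next, apply Lemma~\ref{lem:detect} to this zero-dynamics pair to produce a Lyapunov matrix certifying that a suitable output-injection gain makes the zero-dynamics block Hurwitz with an arbitrary decay margin. Finally, I would assemble $\mathcal{P}$ in block form from the Lyapunov matrix of the zero dynamics and a sufficiently large scaling on the complementary ``output'' direction, then read off $\mathcal{Y}=\mathcal{P}L_1$ and define $\mathcal{F}$ through the structural equation $\mathcal{F}\mathcal{C}=\mathcal{B}^\top\mathcal{P}$, which is consistent precisely because the relative-degree-one structure aligns the row range of $\mathcal{B}^\top\mathcal{P}$ with that of $\mathcal{C}$.

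The hard part will be the sufficiency direction, specifically verifying simultaneously that the composite $\mathcal{P}$ built block-by-block satisfies the strict LMI \eqref{eq:obsLinLMI2at} while also meeting the algebraic constraint \eqref{eq:obsLinLMI2bt}. This requires a careful bookkeeping argument — choosing the relative weights of the blocks in $\mathcal{P}$ large enough to dominate the cross-coupling terms introduced by the change of coordinates, much as in the proof of the Kalman--Yakubovich--Popov / Positive Real Lemma. Because the result is classical and appears as a standalone lemma in \cite{corless98}, I would present the above outline and defer the detailed block construction to that reference rather than reproducing it in full.
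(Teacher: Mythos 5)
The paper itself offers no proof of this lemma: it is imported verbatim from \cite{corless98} with the single sentence ``First we have the following result from \cite{corless98}.'' So there is no in-paper argument to compare against, and your proposal should be judged against that reference. Your necessity direction is correct and complete as written: right-multiplying \eqref{eq:obsLinLMI2bt} by $\mathcal{B}$ and using $\mathcal{P}\succ 0$ does give $\mathrm{rank}\,\mathcal{C}\mathcal{B}=\mathrm{rank}\,\mathcal{B}$, and the pencil argument is sound --- for $x$ in the kernel of the pencil with $\mathcal{C}x=0$, the constraint $\mathcal{B}^\top\mathcal{P}=\mathcal{F}\mathcal{C}$ kills the cross term $x^{*}\mathcal{P}\mathcal{B}u$, the output-injection terms $\mathcal{Y}\mathcal{C}+\mathcal{C}^\top\mathcal{Y}^\top$ vanish on $\ker\mathcal{C}$, and \eqref{eq:obsLinLMI2at} then forces $2\,\mathrm{Re}(\lambda)\,x^{*}\mathcal{P}x<0$, a contradiction for $\mathrm{Re}(\lambda)\ge 0$ unless $x=0$. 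This is a genuine self-contained proof of one implication, which is more than the paper provides.

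The sufficiency direction, however, is only an outline, and the part you yourself identify as hard --- constructing a single $\mathcal{P}$ that simultaneously satisfies the strict inequality \eqref{eq:obsLinLMI2at} and the equality constraint \eqref{eq:obsLinLMI2bt} after the coordinate change --- is exactly where all the work lies; the appeal to Lemma~\ref{lem:detect} handles only the zero-dynamics block, and the consistency of $\mathcal{F}\mathcal{C}=\mathcal{B}^\top\mathcal{P}$ with the block scaling is asserted rather than verified. Since you explicitly defer that construction to \cite{corless98}, which is precisely what the authors do, your treatment matches the paper's level of rigor; but as a standalone proof it establishes only the ``only if'' half, and the ``if'' half remains a citation.
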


The following result provides conditions that, when satisfied, ensure the existence of observers of the form~\eqref{eq:obs_lin} that generates error dynamics that are~\linf-stable with any arbitrary performance level $\gamma>0$.
\begin{lemma}\label{lemma8}
	Suppose $DD_m=0$
	and
	\[CBD_m + DC_mB_m
	,\;\; \begin{bmatrix}
	A -\lambda I	&B		\\
	C				&D
	\end{bmatrix} , \;\;
	\begin{bmatrix}	A_m -\lambda I	&B_m		\\
	C_m				&D_m
	\end{bmatrix}
	\]
	have full column rank for all $\lambda \in \mathbb{C}$ with non-negative real part.
	Consider any matrix  $H\in \mathbb{R}^{n_z\times n_x}$  and any performance level $\gamma >0$.
	Then there exist matrices $\mathcal P= \mathcal P^\top  \succ 0 $, $ \mathcal Y$ and    $\mathcal F$  
	such that ~\eqref{eq:obsLinLMI2at} and ~\eqref{eq:obsLinLMI2bt} hold.
	Choose $\mu_2 >0$ and $\zeta$ to satisfy~\eqref{eq:mu2}
	and
	\eqref{eq:zeta}.
	Then the observer~\eqref{eq:obs_lin} with gain 
	given by~\eqref{eq:L1small}
	generates error dynamics with   performance output $z = He$ that are\ltwo-stable with performance level $\gamma$.
\end{lemma}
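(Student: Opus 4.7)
The plan is to reduce Lemma~\ref{lemma8} to Theorem~\ref{thm:uio_reject} by producing matrices $\mathcal P, \mathcal Y, \mathcal F$ and a scalar $\bar\alpha>0$ that satisfy the hypotheses of that theorem specialized to the linear setting of Section~4. The natural route is to apply Lemma~\ref{lem:trans2} to the augmented triple $(\mathcal A,\mathcal B,\mathcal C)$ defined in~\eqref{eq:augGen}; the central task is thus to show that the three rank assumptions on the original data $(A,B,C,D)$ and $(A_m,B_m,C_m,D_m)$, together with $DD_m=0$, imply the two rank conditions~\eqref{eq:rankCBt} and~\eqref{eq:rankSystemt} for the augmented triple.

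For~\eqref{eq:rankCBt}, I would simply compute $\mathcal C\mathcal B = CBD_m + DC_mB_m$, which has full column rank $n_v$ by hypothesis. Since $\mathcal B$ has $n_v$ columns, this forces $\mathrm{rank}\,\mathcal B = n_v = \mathrm{rank}\,\mathcal C\mathcal B$.

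For~\eqref{eq:rankSystemt}, I would pick any $\lambda\in\mathbb C$ with $\mathrm{Re}(\lambda)\ge 0$ and any vector $(x,x_m,v)$ in the null space of the augmented pencil, and introduce the auxiliary quantity $w = C_mx_m + D_mv$. Using $DD_m=0$, the block rows involving the plant decouple to
\[
\begin{bmatrix} A-\lambda I & B\\ C & D\end{bmatrix}\begin{bmatrix} x\\ w\end{bmatrix}=0,
\]
so the first rank hypothesis forces $x=0$ and $w=0$. The surviving equation together with $w=0$ then gives
\[
\begin{bmatrix} A_m-\lambda I & B_m\\ C_m & D_m\end{bmatrix}\begin{bmatrix} x_m\\ v\end{bmatrix}=0,
\]
and the second rank hypothesis forces $x_m=0$ and $v=0$. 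Hence the augmented pencil has trivial null space, i.e., full column rank $n_\xi+n_v$.

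With the hypotheses of Lemma~\ref{lem:trans2} verified, I obtain matrices $\mathcal P = \mathcal P^\top\succ 0$, $\mathcal Y$ and $\mathcal F$ satisfying~\eqref{eq:obsLinLMI2at} and~\eqref{eq:obsLinLMI2bt}. By continuity of the left-hand side of~\eqref{eq:obsLinLMI2at}, a sufficiently small $\bar\alpha>0$ preserves strict negativity and yields $\Phi_{\bar\alpha}\prec 0$. In the linear setting, taking $M=0$ and all $f$- and $q$-related matrices to be zero, condition~\eqref{eq:barPhiLMI} of Lemma~\ref{thm:finite_gamma_nonlin} collapses to $\Phi_{\bar\alpha}\preceq 0$, while the equalities~\eqref{eq:lmi_nonlin_mu_indep_f2}--\eqref{eq:C'D=0} of Theorem~\ref{thm:uio_reject} reduce, using $\mathcal D = DD_m = 0$, precisely to~\eqref{eq:obsLinLMI2bt}. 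Taking $L_2=0$ (any choice works since $\mathcal D=0$ and $\mathcal D_q=0$), all hypotheses of Theorem~\ref{thm:uio_reject} are met, and that theorem delivers the observer gain~\eqref{eq:L1small} and the asserted $\mathcal L_2$ performance level $\gamma$.

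The main obstacle is the second rank translation step: the introduction of the auxiliary variable $w = C_mx_m + D_mv$ and the crucial cancellation enabled by $DD_m=0$ are what make the block decoupling go through, and this is where I would spend the most care.
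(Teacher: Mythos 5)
Your proposal is correct and follows essentially the same route as the paper: reduce to Theorem~\ref{thm:uio_reject} via Lemma~\ref{lem:trans2}, verify~\eqref{eq:rankCBt} from $\mathcal C\mathcal B = CBD_m+DC_mB_m$, and verify~\eqref{eq:rankSystemt} from the two pencil hypotheses together with $DD_m=0$. The only (cosmetic) difference is that the paper establishes the rank of the augmented pencil by factoring it as a product of two full-column-rank matrices, whereas you run the equivalent null-space chase with the auxiliary variable $w=C_mx_m+D_mv$; both hinge on exactly the same cancellation $DD_m=0$.
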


\begin{proof}
	We use Theorem \ref{thm:uio_reject} and Lemma \ref{lem:trans2}.
	Since $\mathcal{B}_f = \mathcal{D}_f =0$ and considering $M=0$,   
	\eqref{eq:barPhiLMI} 
	reduces to
	\eqref{eq:obsLinLMI}.
	The existence of $\bar{\alpha} >0$ such that \eqref{eq:obsLinLMI} holds is equivalent to 	\eqref{eq:obsLinLMI2at} of Lemma \ref{lem:trans2}.
	Also $\mathcal{D}_q = 0$ and $\mathcal{D} = DD_m =0$; this implies that 
	\eqref{eq:lmi_nonlin_mu_indep_f2}-\eqref{eq:C'D=0}
	reduce to \eqref{eq:obsLinLMI2bt}.
	Since 
	\[
	\mathcal{C}{\mathcal B} = CBD_m + DC_mB_m,
	\]
	has full column rank, condition \eqref{eq:rankCBt} holds.
	Also, $\mathcal{B}$ must have full column rank, that is, $n_v$.
	To verify condition \eqref{eq:rankSystemt} of Lemma~\ref{lem:trans2}, consider any $\lambda \in \mathcal{C}$ with non-negative real part.
	Then
	\begin{align*}
	\begin{bmatrix}
	\mathcal{A} - \lambda I	&\mathcal{B}		\\
	\mathcal{C}	&  0 
	\end{bmatrix}
	&=
	\begin{bmatrix}
	A - \lambda I	&0	&  B\\
	0	&I	&0			\\
	C	&  0				&D
	\end{bmatrix}
	\begin{bmatrix}
	I	&0	&  0\\
	0	&A_m -\lambda I	&B_m			\\
	0	&  C_m				&D_m
	\end{bmatrix}.
	\end{align*}
	As a consequence of the hypotheses of the lemma, the two matrices on the right-hand side of the second equality have maximum column rank;
	hence $		\left[\begin{smallmatrix}
	\mathcal{A} - \lambda I	&\mathcal{B}		\\
	\mathcal{C}	&  0 
	\end{smallmatrix}\right]$ has maximum column rank, that is, $n_\xi +n_v$  which equals $n_\xi + \mbox{rank}\;\mathcal B$.
	condition~\eqref{eq:rankSystemt}. Invoking Theorem~\ref{thm:uio_reject} and Lemma~\ref{lem:trans2} concludes the proof.
\end{proof}

\subsubsection{Connection to classical rank conditions}
Consider the classical linear  case of the linear system, $\dot x = Ax + Bw$, $y=Cx$, and $w=v$. This is described by~\eqref{eq:plant2lin} and~\eqref{eq:internal_model_wlin} with $D=0$, $D_m=I$ and $A_m,B_m,C_m$ vanish. 
Hence,
\begin{align*}
CBD_m + DC_mB_m &= CB\\
\begin{bmatrix}
\mathcal{A} - \lambda I	&\mathcal{B}		\\
\mathcal{C}	&  0 
\end{bmatrix}
&=
\begin{bmatrix}
A -\lambda I	&B		\\
C				&D
\end{bmatrix}, \\
\begin{bmatrix}
A_m -\lambda I	&B_m			\\
C_m				&D_m
\end{bmatrix}&=
-\lambda I\,.
\end{align*}
Consequently,  the conditions in Lemma~\ref{lemma8} reduce to the requirements  that 
$CB$ and 
$
\left[\begin{smallmatrix}
A -\lambda I	&B		\\
C				&0
\end{smallmatrix} \right]
$
have full column rank for all $\lambda \in \mathbb{C}$ with non-negative real part.
With $B$ full column rank, these are exactly  the  classical  conditions  for state estimation to an arbitrary degree of accuracy; see~\cite{corless98}.

\section{Proof of Theorem \ref{thm:1}}
\label{sec:Th1Proof}

First we need the following result.
\begin{lemma}\label{lem:linf}
	Consider a system 
	described by
	\eqref{eq:gensys} with state $e(t)\in \mathbb{R}^{n_e}$, input $v(t)\in\mathbb{R}^{n_v}$ and performance output $z(t) \in\mathbb{R}^{n_z}$.
	Suppose there exists a differentiable function $V:\mathbb{R}^{n_e} \rightarrow \mathbb{R}$  and scalars
	$\alpha, \beta_1, \beta_2,  \mu_1 >0$ and $\mu_2 \ge 0$
	such that
	\begin{equation}\label{eq:pf_V}
	\beta_1  \|e\|^2  \le V(e) \le \beta_2 \|e\|^2
	\end{equation}
	and
	\begin{equation}
	\label{lem2:condns}
	\mathcal D V(e)\, F(t,e,v) \le -2\alpha V(e) - \mu_1\|G(t,e,v)||^2 +\mu_2||v||^2
	\end{equation}
	for all $t\in \mathbb{R}$, $e\in\mathbb{R}^{n_e}$ and $v\in\mathbb{R}^{n_v}$, 
	where $\mathcal D V$ denotes the derivative of $V$.
	Then, for any initial condition $e(t_0) = e_0$  with $t_0\in \mathbb{R}$ and $e_0 \in \mathbb{R}^{n_e}$ and any $\mathcal{L}_2$ exogenous input $v(\cdot):[t_0, \infty) \rightarrow \mathbb{R}^{n_v}$, inequalities
	\eqref{eq:L_2}  and
	\eqref{eq:eBound} hold 
	for all $t\ge t_0$.
	Hence, system~\eqref{eq:gensys} is globally uniformly \ltwo-stable with performance level 
	$\gamma=\sqrt{\mu_2/\mu_1}$.
\end{lemma}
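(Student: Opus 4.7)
The plan is to use a standard comparison / Gronwall argument applied to the dissipation inequality \eqref{lem2:condns}, extracting a decaying bound on the state $e$ and an $\mathcal L_2$ bound on the performance output $z$ from it separately, and then read off the three properties (P1)--(P3) of Definition~\ref{defn1}.

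First I would evaluate \eqref{lem2:condns} along an arbitrary trajectory of \eqref{eq:gensys} with initial condition $e(t_0)=e_0$ and $\mathcal L_2$ input $v(\cdot)$. Setting $V(t)\triangleq V(e(t))$, we get $\dot V(t)\le -2\alpha V(t)-\mu_1\|z(t)\|^2+\mu_2\|v(t)\|^2$. Multiplying by the integrating factor $e^{2\alpha(t-t_0)}$ gives
\[
\frac{d}{dt}\!\left[e^{2\alpha(t-t_0)}V(t)\right]\le e^{2\alpha(t-t_0)}\!\left(-\mu_1\|z(t)\|^2+\mu_2\|v(t)\|^2\right),
\]
so integration from $t_0$ to $t$ yields a single inequality from which the two desired bounds will be extracted.

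Next I would obtain the state bound \eqref{eq:eBound}. Dropping the non-positive $-\mu_1\|z\|^2$ term and bounding $e^{2\alpha(\tau-t_0)}\le e^{2\alpha(t-t_0)}$ in the input integral yields
\[
V(t)\le e^{-2\alpha(t-t_0)}V(e_0)+\mu_2\int_{t_0}^{t}e^{-2\alpha(t-\tau)}\|v(\tau)\|^2\,d\tau\le e^{-2\alpha(t-t_0)}V(e_0)+\mu_2\|v(\cdot)\|_2^{\,2}.
\]
Applying the sandwich $\beta_1\|e\|^2\le V\le\beta_2\|e\|^2$ from \eqref{eq:pf_V} and then $\sqrt{a+b}\le\sqrt a+\sqrt b$ delivers the claimed bound \eqref{eq:eBound}. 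Then for the output bound \eqref{eq:L_2}, I would use that $V\ge 0$ and discard the $-2\alpha V$ term: integrating $\dot V\le -\mu_1\|z\|^2+\mu_2\|v\|^2$ from $t_0$ to an arbitrary $T$ and letting $T\to\infty$ gives
\[
\mu_1\|z(\cdot)\|_2^{\,2}\le V(e_0)+\mu_2\|v(\cdot)\|_2^{\,2}\le\beta_2\|e_0\|^2+\mu_2\|v(\cdot)\|_2^{\,2},
\]
and again $\sqrt{a+b}\le\sqrt a+\sqrt b$ produces \eqref{eq:L_2}.

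Finally, I would verify Definition~\ref{defn1}. Setting $v\equiv 0$ in the state bound gives $\|e(t)\|\le \sqrt{\beta_2/\beta_1}\,e^{-\alpha(t-t_0)}\|e_0\|$, which is global uniform exponential stability (P1). The state bound at general $v$ provides a uniform-in-$t$ bound on $\|e(t)\|$ in terms of $\|e_0\|$ and $\|v(\cdot)\|_2$, yielding (P2). The output bound gives (P3) with $\beta_2(e_0,\|v(\cdot)\|_2)=\sqrt{\beta_2/\mu_1}\,\|e_0\|+\sqrt{\mu_2/\mu_1}\,\|v(\cdot)\|_2$, so that $\beta_2(0,\|v(\cdot)\|_2)=\sqrt{\mu_2/\mu_1}\,\|v(\cdot)\|_2=\gamma\|v(\cdot)\|_2$, and the system is globally uniformly $\mathcal L_2$-stable with performance level $\gamma=\sqrt{\mu_2/\mu_1}$. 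The only subtlety I anticipate is the passage from the quadratic-sum bounds to the additive-root bounds using $\sqrt{a+b}\le\sqrt a+\sqrt b$; otherwise the argument is a routine Lyapunov dissipation estimate and I foresee no substantive obstacle.
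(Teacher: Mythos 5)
Your proposal is correct and follows essentially the same route as the paper's proof: integrate the dissipation inequality (dropping the $-\mu_1\|z\|^2$ term) to get the exponential-plus-$\mathcal L_2$ bound on $V$, hence on $\|e(t)\|$ via \eqref{eq:pf_V} and $\sqrt{a+b}\le\sqrt{a}+\sqrt{b}$, and separately integrate $\mu_1\|z\|^2\le -\dot V+\mu_2\|v\|^2$ using $V\ge 0$ to obtain \eqref{eq:L_2}. The only cosmetic difference is that you make the integrating factor explicit where the paper simply states the resulting comparison bound.
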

\begin{proof}
	Consider any initial condition $e(t_0) = e_0$ and any $\mathcal{L}_2$ exogenous input $v(\cdot):[t_0, \infty) \rightarrow \mathbb{R}^{n_v}$.
	Recalling~\eqref{lem2:condns}, the time-derivative of $V(e)$ evaluated  along a  corresponding trajectory of  \eqref{eq:gensys}   satisfies
	\begin{align}
	\frac{dV(e(t))}{dt} &= \mathcal D V(e(t))\dot{e}(t) =\mathcal D V(e(t))\, F(t,e(t),v(t)) \nonumber\\
	&\le -2\alpha V(e(t)) - \mu_1\|z(t)\|^2 + \mu_2\|v(t)\|^2		\label{eq:Vdot0}\\		
	&\le -2\alpha V(e(t)) + \mu_2\|v(t)\|^2
	\label{eq:Vdot}
	\end{align}
	for all $t\ge t_0$.
	Hence,
	\begin{align*}
	V(e(t))  &\le   e^{-2 \alpha (t-t_0)} V(e_0)
	+  \mu_2 \int_{t_0}^t e^{-2 \alpha (t-\tau)} \|v(\tau)\|^2d \tau
	\\
	&\le e^{-2 \alpha (t-t_0)}\beta_2 \|e_0\|^2 +\mu_2  \|v(\cdot)\|_2^2
	\end{align*}
	Since $\beta_1 \|e\|^2 \le V(e)$ we see that
	\begin{align}
	\|e(t)\|^2   \le e^{- 2\alpha (t-t_0)}(\beta_2/\beta_1) \|e_0\|^2 +(\mu_2/\beta_1)  \|v(\cdot)\|^2_2,
	\end{align}
	from which it follows that
	\begin{align}
	\|e(t)\|   \le e^{- \alpha (t-t_0)}\sqrt{\beta_2/\beta_1} \|e_0\| +\sqrt{\mu_2/\beta_1}  \|v(\cdot)\|_2.
	\end{align}
	
	To demonstrate  \eqref{eq:L_2}, 
	note that \eqref{eq:Vdot0}  implies that
	\begin{align}
	\label{eq:Vdot2}
	\mu_1\|z(t)\|^2 \le  - \frac{dV(e(t))}{dt} + \mu_2\|v(t)\|^2,
	\end{align}
	which, upon integrating from $t_0$ to any $t\ge t_0$    results in
	\begin{align*}
	\mu_1  \int_{t_0}^t  \|z(\tau)\|^2d \tau 
	&\le V(e_0) + \mu_2 \|v(\cdot)\|_2^2.
	\end{align*}
	Hence, for all $t\ge t_0$,
	\begin{align*}
	\int_{t_0}^t  \|z(\tau)\|^2d \tau 
	&\le \mu_1^{-1} V(e_0) +\mu_2/\mu_1\|v(\cdot)\|_2^2		\\
	&\le \beta_2/\mu_1 \|e_0\|^2 + \mu_2/\mu_1\|v(\cdot)\|_2^2.	
	\end{align*}
	This implies that  $z(\cdot)$ is an $\mathcal{L}_2$ signal and 
	\begin{equation}
	\|z(\cdot)\|_2 \le \sqrt{\beta_2/\mu_1} \|e_0\| + \sqrt{\mu_2/\mu_1}\|v(\cdot)\|_2,
	\end{equation}%
	which concludes the proof.
\end{proof}

Consider now an input-output system described by
\begin{align}
\dot e&= \tilde{\mathcal A }e + \tilde{\mathcal B}_f \tilde{f} +\tilde{\mathcal B} v			\label{eq:io2a}, \quad z= He,
\end{align}
and suppose there is a  symmetric matrix $M$ so that the term $\tilde{f}$ satisfies
\begin{equation}
\label{eq:iqct}	\begin{bmatrix}
\tilde{q} \\ \tilde{f}
\end{bmatrix}^\top M \begin{bmatrix}
\tilde{q}\\ \tilde{f}
\end{bmatrix} \ge 0
\quad  \mbox{with}\quad
\tilde{q}= \tilde{\mathcal C}_q e+ \tilde{\mathcal D}_{qf} \tilde{f} + \tilde{\mathcal{D}}_{q}  v 
\end{equation}
for all $t\ge 0$, $e\in\mathbb{R}^{n_e}$ and $v\in\mathbb{R}^{n_v}$.
Then we have the following result.%

\begin{lemma}
	\label{lem:Lemma3}
	Consider system \eqref{eq:io2a}  satisfying \eqref{eq:iqct} and suppose that there is  a  matrix  $\mathcal P=\mathcal P^\top \succ 0$
	and   scalars $\alpha, \mu_1 >0$, $\mu_2 \ge 0$,
	such that 
	\begin{align}
	\label{eq:theo1_a}
	\begin{bmatrix}
	{\mathcal P} \tilde{\mathcal A}  +\tilde{\mathcal A}^\top \mathcal P+  2\alpha { \mathcal P} + \mu_1H^\top H
	& { \mathcal P}\tilde{\mathcal B}_f & {\mathcal P}\tilde{\mathcal B} \\
	\star & 0 & 0\\
	\star & 0 & -\mu_2I
	\end{bmatrix}+ \tilde{\Gamma}^\top M \tilde{\Gamma} & \preceq 0,
 	\end{align}%
	where
	\begin{equation}
	\tilde{\Gamma} = \begin{bmatrix}\tilde{\mathcal C}_q  & \tilde{\mathcal D}_{qf} & \tilde{\mathcal D}_q   \\ 0& I & 0 
	\end{bmatrix}.
	\label{eq:z}
	\end{equation}
	Then, for any initial condition $e(t_0) = e_0$  with $t_0\in \mathbb{R}$ and $e_0 \in \mathbb{R}^{n_e}$ and any $\mathcal{L}_2$  input $v(\cdot):[t_0, \infty) \rightarrow \mathbb{R}^{n_v}$, inequalities
	\eqref{eq:L_2}  and
	\eqref{eq:eBound} hold 
	for all $t\ge t_0$ with $\beta_1 = \lambda_{\min}( { \mathcal P})$ and $\beta_2 = \lambda_{\max}( { \mathcal P})$.
	Hence, system~\eqref{eq:io2a} 
	is \ltwo-stable with  performance level $\gamma=\sqrt{\mu_2/\mu_1}$.
\end{lemma}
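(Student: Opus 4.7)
The plan is to reduce Lemma~\ref{lem:Lemma3} to Lemma~\ref{lem:linf} via the quadratic storage function $V(e)=e^\top\mathcal{P}e$, with the IQC absorbed through a single S-procedure-style manipulation. The two-norm bounds on $V$ are immediate: $\lambda_{\min}(\mathcal{P})\|e\|^2 \le V(e) \le \lambda_{\max}(\mathcal{P})\|e\|^2$, which supplies the required $\beta_1,\beta_2$. So the only substantive work is to derive the dissipation inequality~\eqref{lem2:condns} from the LMI \eqref{eq:theo1_a} together with the IQC \eqref{eq:iqct}, after which Lemma~\ref{lem:linf} delivers \eqref{eq:L_2} and \eqref{eq:eBound} directly.

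First I would compute $\mathcal{D}V(e)\dot e = 2e^\top\mathcal{P}(\tilde{\mathcal A}e + \tilde{\mathcal B}_f \tilde f + \tilde{\mathcal B}v)$ and stack the variables into $\eta = \begin{bmatrix} e^\top & \tilde f^\top & v^\top\end{bmatrix}^\top$. A direct block-matrix computation gives
\begin{equation*}
\dot V + 2\alpha V + \mu_1 \|He\|^2 - \mu_2 \|v\|^2 = \eta^\top \Xi_0 \eta,
\end{equation*}
where $\Xi_0$ is precisely the first matrix on the left-hand side of \eqref{eq:theo1_a}. I would also observe that $\tilde\Gamma\,\eta = \begin{bmatrix} \tilde q^\top & \tilde f^\top\end{bmatrix}^\top$, by definition of $\tilde\Gamma$ in \eqref{eq:z} and of $\tilde q$ in \eqref{eq:iqct}. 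Hence
\begin{equation*}
\eta^\top \tilde\Gamma^\top M \tilde\Gamma\,\eta = \begin{bmatrix}\tilde q \\ \tilde f\end{bmatrix}^\top M \begin{bmatrix}\tilde q \\ \tilde f\end{bmatrix} \ge 0
\end{equation*}
by the IQC assumption on $\tilde f$.

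Combining these, the LMI \eqref{eq:theo1_a} applied to $\eta$ yields $\eta^\top\Xi_0\eta \le -\eta^\top\tilde\Gamma^\top M\tilde\Gamma\,\eta \le 0$, which rearranges to exactly $\mathcal{D}V(e)\,F(t,e,v)\le -2\alpha V(e)-\mu_1\|He\|^2+\mu_2\|v\|^2$. Setting $G(t,e,v)=He=z$, this is the hypothesis \eqref{lem2:condns} of Lemma~\ref{lem:linf} with $\beta_1=\lambda_{\min}(\mathcal{P})$ and $\beta_2=\lambda_{\max}(\mathcal{P})$. Invoking Lemma~\ref{lem:linf} then produces \eqref{eq:L_2} and \eqref{eq:eBound} and concludes the $\mathcal{L}_2$-stability claim with performance level $\gamma=\sqrt{\mu_2/\mu_1}$.

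There is no real obstacle here: the lemma is an essentially mechanical application of the S-procedure, with the LMI \eqref{eq:theo1_a} designed precisely so that the ``bad'' cross terms in $\dot V$ are dominated after adding the nonnegative IQC quantity. The only point that requires a little care is bookkeeping: verifying that the block structure of $\Xi_0$ matches the algebraic expansion of $\dot V + 2\alpha V + \mu_1\|He\|^2 - \mu_2\|v\|^2$ (in particular that the $(1,2)$ and $(1,3)$ blocks are $\mathcal{P}\tilde{\mathcal B}_f$ and $\mathcal{P}\tilde{\mathcal B}$ respectively, with zero $(2,2)$, $(2,3)$ blocks and $-\mu_2 I$ in the $(3,3)$ block) and that $\tilde\Gamma\eta$ equals $[\tilde q;\tilde f]$ as specified in \eqref{eq:iqct} and \eqref{eq:z}.
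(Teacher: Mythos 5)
Your proposal is correct and follows essentially the same route as the paper's own proof: the quadratic storage function $V(e)=e^\top\mathcal{P}e$ with the Rayleigh bounds, pre- and post-multiplication of \eqref{eq:theo1_a} by $\begin{bmatrix}e^\top & \tilde f^\top & v^\top\end{bmatrix}$, discarding the nonnegative IQC term via \eqref{eq:iqct}, and then invoking Lemma~\ref{lem:linf}. No gaps; the bookkeeping you flag is exactly the content of the paper's argument.
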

\begin{proof}
	We will show that  system \eqref{eq:io2a}-\eqref{eq:iqct} 
	satisfies the hypotheses of Lemma~\ref{lem:linf} with $V(e) = e^\top  { \mathcal P}  e$ . 
	This choice of $V$ satisfies the Rayleigh inequality
	$$\lambda_{\min}( { \mathcal P})\|e\|^2 \le V(e)\le \lambda_{\max}( { \mathcal P})\|e\|^2$$
	for all $e\in\mathbb{R}^{n_e}$. Hence,~\eqref{eq:pf_V} holds with $\beta_1 = \lambda_{\min}( { \mathcal P}) >0$ and $\beta_2 = \lambda_{\max}( { \mathcal P})$.
	For system~\eqref{eq:io2a}-\eqref{eq:iqct}, $$F(t,e,v) =  \tilde{\mathcal A }e +\tilde{\mathcal B}_f \tilde{f} +\tilde{\mathcal B} v.$$
	Therefore,
		\begin{align}\label{pf1:1}
		\mathcal D V(e)F(t,e,v) &= 2e^\top { \mathcal P}(\tilde{\mathcal A }e +\tilde{\mathcal B}_f \tilde{f} +\tilde{\mathcal B} v),		\nonumber\\
		&= e^\top( { \mathcal P}  \tilde{\mathcal A } +  \tilde{\mathcal A }^\top { \mathcal P}) e + 2 e^\top\tilde{\mathcal B}_f \tilde{f} + 2e^\top\tilde{\mathcal B} v.
		\end{align}%
	Recalling the description of $\tilde{q}$ in~\eqref{eq:iqct}, we see that
	$\tilde{\Gamma} \begin{bmatrix}
	e^\top &  \tilde{f}^\top & v^\top
	\end{bmatrix}^\top = \begin{bmatrix}
	\tilde{q}^\top & \tilde{f}^\top
	\end{bmatrix}^\top$.
	Hence, pre-and post-multiplying the matrix inequality~\eqref{eq:theo1_a} by
	$ \begin{bmatrix}
	e^\top  & \tilde{f}^\top & v^\top
	\end{bmatrix} 
	$
	and its transpose results in
	{\small
		\begin{align}
		\label{eq:Vdot}
		\nonumber &\mathcal D V(e)F(t,e,v) + 2\alpha V  +\mu_1\|z\|^2 - \mu_2 \|v\|^2 + 
		\begin{bmatrix}
		\tilde{q} \\ \tilde{f}
		\end{bmatrix}^\top M \begin{bmatrix}
		\tilde{q} \\ \tilde{f}
		\end{bmatrix}
		\le 0.
		\end{align}}%
	It now follows from \eqref{eq:iqct} that
	\[
	\mathcal D V(e)F(t,e,v) \le - 2\alpha V (e) + \mu_1\|z\|^2 -\mu_2 \|v\|^2,
	\]
	that is,~\eqref{lem2:condns} holds.
	Using Lemma~\ref{lem:linf}, we are done.
\end{proof}

\subsection{Proof of Theorem \ref{thm:1}}
With the estimation error given by
\eqref{eq:error},   it follows from \eqref{eq:obs} and \eqref{eq:aug_sys}  that the observer error dynamics are given by
\begin{subequations}
	\label{eq:err_dyn}
	\begin{align}
	\dot e &= (\mathcal A + L_1 \mathcal C)e +( \mathcal B_f+L_1\mathcal{D}_f) \tilde f -(\mathcal B +L_1\mathcal{D})v,\\
	\tilde f &= f(t,y,q+\tilde q) - f(t,y,q),\\
	\tilde q &= (\mathcal C_q + L_2 \mathcal C) e + \mathcal{D}_{qf} \tilde f -(\mathcal{D}_q+L_2 \mathcal{D})v.
	\end{align}
\end{subequations}
That is, it is described by \eqref{eq:io2a}  and satisfies \eqref{eq:iqct} with 
{\small
	\begin{align*}
	\tilde{A} &= \mathcal{A}+L_1 \mathcal{C}, \quad \tilde{B}_f =  \mathcal B_f+L_1\mathcal{D}_f,\quad 
	\tilde{B} = -(\mathcal B +L_1\mathcal{D}), \\
	\tilde{\mathcal C}_q &= \mathcal{C}_q + L_2\mathcal{C},\quad 
	\tilde{\mathcal D}_{qf} = \mathcal{D}_{qf} +L_2\mathcal{D}_f, \quad  \mathcal{D}_q= -(\mathcal{D}_q+L_2 \mathcal{D}).
	\end{align*}}%
Recalling that $L_1 = \mathcal P^{-1}\mathcal Y,$ we see that~\eqref{eq:thm1a} is the same as~\eqref{eq:theo1_a}. The desired result now follows from Lemma~\ref{lem:Lemma3}.


\section{Numerical Example}
We employ a modified model of the active magnetic bearing system investigated in~\cite{tsiotras2005low}. 
The modification includes disturbance  inputs and measurement  noise  to illustrate the unknown input observer capabilities and to make the problem more challenging than the one considered  in our previous work~\cite{Chakrabarty2016}. The model is given by
\begin{equation}
\label{eq:ex3_model}
\dot{x} = \begin{bmatrix}
x_2 + w_1\\
x_3+x_3|x_3| 
\\ w_2 
\end{bmatrix},
\qquad y = \begin{bmatrix}
x_1 + 0.1w_1 \\ x_2
\end{bmatrix},
\end{equation}
which is in the form of~\eqref{eq:plant2} with $f_1(t,y,q_1) =q_1|q_1|$ and $q_1 = x_3$.
Considering $$w_1(t) = 1/\sqrt{1+t} \qquad \text{  and  } \qquad  w_2(t) = \log(1+t),$$
$w$ is unbounded in the $\mathcal L_2$ sense, but $\dot w$ is bounded.
Also $w_2$ is unbounded in the usual sense.
Hence $w$ can be modelled by \eqref{eq:simpleW} where $v = \dot{w}$.
Any matrix of the form
$$ M = \kappa\begin{bmatrix}
0 & 1\\ 1 & 0
\end{bmatrix}$$
with any $\kappa\ge 0$
is incremental multiplier matrix for $f_1$.
Note that we will solve for $\kappa$: we only know the form of $M$, the parameter $\kappa$ is an optimization variable. We choose $z=w$, which implies 
that we are interested in obtaining a good estimate of $w$ and are ready to accept lower accuracy when reconstructing $x$. Thus, $z=w$.
We  fix $L_2 = \begin{bmatrix}
0 & -110
\end{bmatrix}$ and solve~\eqref{eq:gevp} with a line search to find an optimal $\alpha$. We get
$\alpha^\star = 0.710$, $\kappa=1.6\times 10^6$, and $\mu_2^\star = 0.08$. 
We test our proposed observer on system~\eqref{eq:ex3_model} with the initial conditions $$x(0)= 
\begin{bmatrix}
-2.7247 &  10.9842 &  -2.7787
\end{bmatrix}^\top$$ and $\hat{\xi}(0) =  0.$

\begin{figure}[!h]
	\centering	\includegraphics[width=.55\columnwidth]{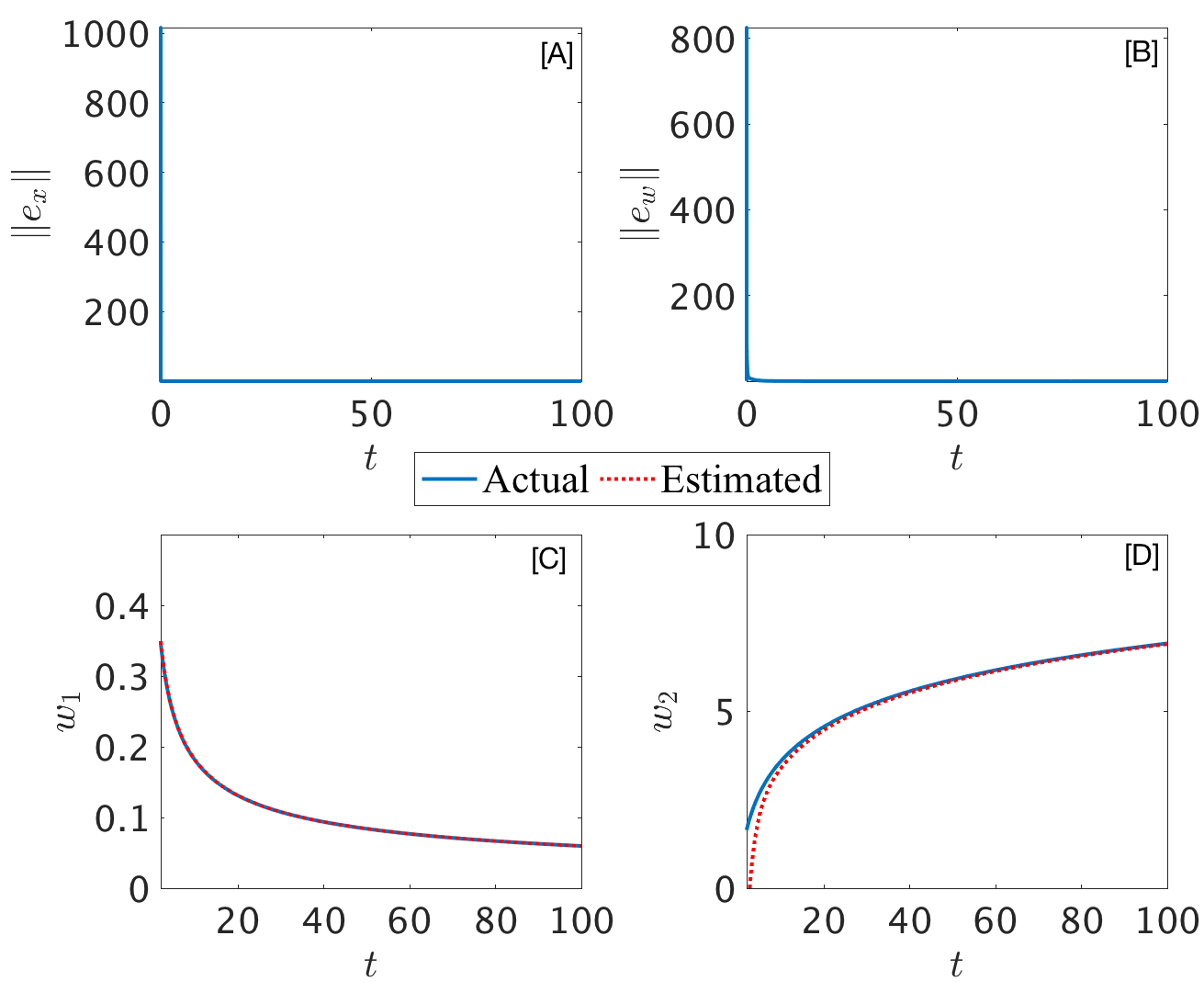}
	\caption{[A] State estimation error $e_x\triangleq \hat x - x$ of the nonlinear system in~\eqref{eq:ex3_model}. [B] Unknown input estimation error. The convergence of the norm of $e_w \triangleq \hat w- w$ is illustrated. [C, D] Unknown inputs (blue) and their estimates (dashed red).}
	\label{fig:ex3}
\end{figure}
The response of the proposed observer is shown in Figure~\ref{fig:ex3}. Note that the unknown input $w_2$ is monotonically increasing, yet from Figure~\ref{fig:ex3}[C-D], we observe that the estimates of the unbounded unknown inputs are very accurate; this is to be expected since $\gamma$ is small. 

\section{Conclusions}\label{sec:conc}
This paper provides an LMI based approach to the design of observers for estimating the state and unknown exogenous input for a wide range of nonlinear systems.
The resulting input-output system from exogenous input to estimation error is $\mathcal{L}_2$ stable with a gain that can be pre-specified and computed via standard toolboxes.

\bibliographystyle{IEEEtran}
\bibliography{../refs}

\end{document}